\documentclass[a4paper,12pt]{article}
\pdfoutput=1
\usepackage[margin=1in]{geometry}

\usepackage{amsmath}
\usepackage{physics}
\usepackage{dsfont}
\usepackage{xcolor}
\usepackage{mathrsfs}
\usepackage{graphicx}
\usepackage{authblk}

\usepackage[utf8]{inputenc}
\usepackage[backend=biber,style=alphabetic,url=false,maxbibnames=99,doi=false]{biblatex}
\newbibmacro{string+doi}[1]{
  \iffieldundef{doi}{
\iffieldundef{url}{#1}{\href{\thefield{url}}{#1}}
}{\href{http://dx.doi.org/\thefield{doi}}{#1}}}
\DeclareFieldFormat{title}{\usebibmacro{string+doi}{\mkbibemph{#1}}}
\DeclareFieldFormat[article]{title}{\usebibmacro{string+doi}{#1}}
\AtEveryBibitem{\clearfield{issn}
\clearfield{month}}
\usepackage[hidelinks]{hyperref}
\hypersetup{colorlinks=true,citecolor=blue,linkcolor=blue,filecolor=blue,urlcolor=blue,breaklinks=true}

\usepackage{tikz}
\usetikzlibrary{calc,decorations.pathreplacing}

\usepackage{amssymb}
\usepackage{amsthm}
\usepackage{pifont}
\usepackage{multirow}

\newtheorem{lemma}{Lemma}
\newtheorem{theorem}{Theorem}
\newtheorem{remark}{Remark}
\newtheorem{corollary}{Corollary}
\newtheorem{proposition}{Proposition}

\newcommand{\id}{\mathds{1}}
\newcommand{\comment}[1]{}

\newtheorem{definition}{Definition}

\newcommand{\NN}{\mathbb{N}}
\newcommand{\cA}{\mathcal{A}}

\newcommand{\CC}{\mathbb{C}}
\newcommand{\RR}{\mathbb{R}}
\newcommand{\ZZ}{\mathbb{Z}}
\newcommand{\cB}{\mathcal{B}}
\newcommand{\cH}{\mathcal{H}}
\newcommand{\cG}{\mathcal{G}}
\newcommand{\cO}{\mathcal{O}}
\newcommand{\cP}{\mathcal{P}}

\DeclareMathOperator{\diam}{diam}

\DeclareMathOperator{\corr}{Corr}

\date{}
\title{Spatial Entanglement Sudden Death in Spin Chains at All Temperatures}
\author{Samuel O. Scalet\footnote{Email: sscalet@ucdavis.edu}\\
{\normalsize Department of Computer Science, University of California, Davis, CA, 95616, USA}}

\begin{document}

\maketitle

\begin{abstract}
        We prove a finite entanglement length for the Gibbs state of any local Hamiltonian on a spin chain at any finite temperature:
        After removing an interval of size at least equal to the entanglement length, the remaining left and right half-chains are in a separable state.
\end{abstract}
\section{Introduction}

        Entanglement is one of the most fundamental concepts of quantum mechanics.
        While initially perceived as "unreal" and therefore necessarily incomplete \cite{einstein1935}, entanglement has been demonstrated as a distinguishing and verifiable feature that sets quantum mechanics apart from classical theories \cite{bell1964}. 
        Today, the entire field of quantum information theory has been built on the concept of entanglement, where it facilitates numerous applications such as quantum teleportation \cite{bennett1993}, encryption \cite{bennett1984}, and even quantum computation \cite{feynman1982}, to name just a few.

        Quantum information theory has also opened a new perspective on the field of many-body physics.
        As the underlying interactions in many-body systems are quantum, it is only natural to take into account entanglement in the study of their correlations. 
        This program has led to a plethora of new insights about connections between entanglement and correlation measures with phenomenological and computational results, see \cite{amico2008,cirac2021,alhambra2023} for some reviews on this vast area.

        \emph{Spin chains in one dimension} in their thermal state stand as a paradigmatic and also experimentally interesting example of this line of work.
        In the translation-invariant case, they have been shown to exhibit no phase transitions and to possess a unique Gibbs state by Araki \cite{araki1969,araki1975}.
        Many further results on their entanglement structure followed including the decay of other measures of correlations and entanglement, such as mutual and conditional mutual informations \cite{bluhm2022,gondolf2024,kato2019,kuwahara2024,kuwahara2022a}, efficient classical simulation \cite{kuwahara2018,kuwahara2021,fawzi2023,scalet2024,achutha2024}, sampling on quantum computers \cite{bilgin2010}, and their fast mixing for open-system dynamics \cite{bergamaschi2025}, including results that lift the assumption of translation-invariance.
        In fact, the decay of correlations in terms of correlation functions or mutual information is shown to be exponential \cite{araki1969,kimura2025,bergamaschi2025,bluhm2022}, which is qualitatively optimal as can be easily demonstrated even by a \emph{classical example} of a nearest-neighbour spin chain.
       We ask the following question:

       \begin{center}
               \textit{ What part of the exponentially decaying correlations is nonclassical?}
       \end{center}
       When asking this question, one might wonder how to quantify this entanglement.
       Typical candidates might include the entanglement of formation $E_F$, entanglement cost $E_C$, the distillable entanglement $E_D$, or distance measures to the set of separable states, including entanglement relative entropy $E_R$, which are all motivated by different operational interpretations.
       We omit the detailed definitions of all these measures and refer the reader to \cite{horodecki2009}.
       Somewhat unsatisfactorily, they can display wildly different behaviour as demonstrated by the existence of bound entangled states \cite{horodecki1998}, which have nonzero entanglement but no distillable entanglement.
       The exponential decay of mutual information bounds all of these\footnote{This is immediate for $E_R$ and the squashed entanglement, which in turn bounds the remaining ones \cite{horodecki2009,brandao2011}.}, but is not faithful, so lower bounds remain open --- in particular, it is nonzero for purely classical correlations.

       We circumvent this conundrum by proving the following stronger statement:
       \begin{center}
               \textit{The left and right ends of the chain in its Gibbs state become\\ unentangled when separated by a finite lengthscale.}
       \end{center}
       Albeit generally NP-hard to check \cite{gurvits2004}, separability has a clean definition, not suffering from the above dichotomy: A state is separable if it is a convex combination of product states, see Definition~\ref{def:sep}. 
       Stated more precisely, the theorem we prove is:
       \begin{theorem}[Informal version of Theorem~\ref{thm:mainTec}]
               For a local Hamiltonian on a spin chain, there exists a constant $\ell\in\NN$ that only depends on the temperature and locality, but not on the system size, such that the following holds.
               For any tripartite interval $ABC$ with 
               \[
                       |B|\ge\ell\,,
               \]
               and $\rho^{ABC}=\exp(-\beta H_{ABC})/Z_{ABC}$ the Gibbs state on the interval, we have that
               \[
                       \rho_{AC}=\Tr_B[\rho^{ABC}]
               \]
               is separable between $A$ and $C$.
       \end{theorem}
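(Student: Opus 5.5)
The approach is to show that $\rho_{AC}$ lies close to a product state $\rho_A\otimes\rho_C$ in a \emph{multiplicative} (operator-inequality) sense, and then to use the fact that a small enough neighbourhood of a full-rank product state contains only separable states. Concretely, I would aim for a bound of the form
\[
(1-\epsilon)\,\rho_A\otimes\rho_C\;\le\;\rho_{AC}\;\le\;(1+\epsilon)\,\rho_A\otimes\rho_C,
\]
with $\epsilon$ decaying exponentially in $|B|$. From the left inequality, $\rho_{AC}=(1-\epsilon)\rho_A\otimes\rho_C+\epsilon\,\sigma$ for some state $\sigma$. This alone does not give separability, because $\sigma$ may be entangled. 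So I would instead use the two-sided bound together with a robustness argument around product states.

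\textbf{Step 1.} Establish the approximate factorization. Using Araki's expansionals, or the local-perturbation results for one-dimensional Gibbs states of \cite{araki1969,kuwahara2024,bluhm2022}, write $e^{-\beta H_{ABC}} = E\,(e^{-\beta H_{AB_1}}\otimes e^{-\beta H_{B_2C}})\,E^\dagger$. Here $E$ is quasi-local near the cut inside $B$ and has bounded norm, and so has its inverse. Iterating this factorization across $B$ yields a "Markov-like" structure. Combined with exponential decay of correlations in the strong, operator-valued form (Araki-type bounds such as $\|\rho_A^{-1/2}\rho_{AC}\rho_C^{-1/2}\cdots\|$), this should give
\[
\big\|(\rho_A\otimes\rho_C)^{-1/2}\rho_{AC}(\rho_A\otimes\rho_C)^{-1/2}-\id\big\|_\infty\le\epsilon = Ce^{-|B|/\xi}.
\]
\textbf{Step 2.} Convert this to separability. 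Write $\rho_{AC}=(\rho_A\otimes\rho_C)^{1/2}(\id+X)(\rho_A\otimes\rho_C)^{1/2}$ with $\|X\|_\infty\le\epsilon$. Conjugation by the product operator $\rho_A^{1/2}\otimes\rho_C^{1/2}$ preserves separability, so it suffices to show that $\id+X$ is separable. Gurvits–Barnum shows that any $Y$ with $\|Y-\id\|_2\le1$ is separable. The difficulty is that $\|X\|_2\le d^{|A|+|C|}\epsilon$ in Hilbert–Schmidt norm on the unnormalized identity, and this blows up with $|A|,|C|$.

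\textbf{Main obstacle.} That dimension dependence must be removed, since $A$ and $C$ are arbitrarily large. To handle it, I would exploit locality: the deviation $X$ should itself be a sum of terms localized near the boundaries of $B$, i.e.\ supported on $A_\partial C_\partial$ within distance $\ell$ of $B$. I would restructure Step 1 so that the bound reads
\[
\rho_{AC}=M\,\big(\tilde\rho_{A}\otimes\tilde\rho_{C}\big)\,M^\dagger,
\]
where $M=M_{A_\partial C_\partial}$ acts only on $O(1)$ boundary sites and satisfies $\|M^\dagger M-\id\|\le\epsilon$. Then $\id+X$ effectively lives on a bounded-dimensional space, and Gurvits–Barnum applies with dimension $d^{2\ell}$.

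Obtaining this boundary-localized form is the crux. The route I would try first is to include a buffer region $A_\partial$, $C_\partial$ into $B$ by tracing and then re-extending, using quasi-locality of $E$ and the recovery-map (approximate Markov) structure of 1D Gibbs states. This would pay dimension only $d^{O(\ell)}$ against a decay $e^{-|B|/\xi}$, and then choosing $|B|$ large enough closes the argument.
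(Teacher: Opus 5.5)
Your Steps 1--2 correctly isolate the obstacle, the dimension dependence of the Gurvits--Barnum ball. But the mechanism you propose for removing it does not work as stated, and that step is the whole content of the theorem.

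First, an \emph{exactly} boundary-supported $M=M_{A_\partial C_\partial}$ cannot be expected for non-commuting interactions, since there is no exact quantum Markov (Hammersley--Clifford) structure. The deviation from a product is only quasi-local, so you are left with contributions supported on $k$ sites on each side of $B$, for every $k$ up to $|A|,|C|$. Gurvits--Barnum applied to such a piece requires its norm to be at most $d^{-k}$ times the share of the identity allotted to it. You therefore need decay in the support size $k$ faster than $d^{-k}$. This cannot come from $\epsilon=Ce^{-|B|/\xi}$, which is independent of $k$. Nor can it come from correlation decay, whose rate tends to zero as $\beta$ grows.

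Second, even granting an exactly local $M$, the conclusion does not follow. The condition $\|M^\dagger M-\id\|\le\epsilon$ allows $M$ to be an entangling unitary on $A_\partial C_\partial$. Moreover, whitening $M(\tilde\rho_A\otimes\tilde\rho_C)M^\dagger$ by $(\tilde\rho_A\otimes\tilde\rho_C)^{-1/2}$ replaces $M$ by $(\tilde\rho_A\otimes\tilde\rho_C)^{-1/2}M(\tilde\rho_A\otimes\tilde\rho_C)^{1/2}$. That operator is no longer supported on the boundary, because $\tilde\rho_A$ is correlated across $A_\partial$ and $A\setminus A_\partial$. So $\id+X$ does not live on a $d^{2\ell}$-dimensional space.

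The paper avoids both problems by conjugating not with the nonlocal marginals but with the product operator $e^{H_A/2}\otimes e^{H_C/2}$. This is a product for $|B|\ge r$, hence separability-preserving, and it gives
\[
\frac{Z_{ABC}}{Z_B}e^{H_{AC}/2}\rho_{AC}e^{H_{AC}/2}=\tr_B\big[\rho^B E^\dagger_{A,B}(1/2)E^\dagger_{AB,C}(1/2)E_{AB,C}(1/2)E_{A,B}(1/2)\big],
\]
which is the identity far from $B$ up to Araki corrections.

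The paper then truncates the expansionals to $\partial^kB$ and telescopes in $k$. The $k_0$-term is exactly $\frac{Z_{k_0}}{Z_B}e^{H^{k_0}_{AC}/2}\rho^{k_0}_{AC}e^{H^{k_0}_{AC}/2}$, a sandwiched marginal of the Gibbs state on the constant-size region $\partial^{k_0}B$. On that region, the merely \emph{additive} factorization $\|\rho_{AC}-\rho_A\otimes\rho_C\|\le Ce^{-\alpha|B|}$ plus faithfulness $\|\rho_X^{-1}\|\le Ce^{\alpha|X|}$ yields $\gamma\id+\Gamma$. Here $\Gamma$ is separable and $\gamma=e^{-O(k_0)}$, once $|B|\ge\ell_1(k_0)=O(k_0)$ (Proposition~\ref{prop:constACsep}).

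The remainders $\Delta_k$ are supported on $2k+2$ sites and satisfy $\|\Delta_k\|\le 4\cG^{3+k}/(\lfloor k/r\rfloor+1)!$. This \emph{factorial} decay from Lemma~\ref{lem:arakiExp}, uniform in the correlation length, is the quantitative ingredient missing from your sketch. For a large enough constant $k_0$ it beats both $d^{k+1}$ and $e^{\alpha k_0}$. Hence each $c_k\id+\Delta_k$, with $\sum_k c_k\le\frac{Z_{k_0}}{Z_B}\gamma$, is separable by Lemma~\ref{lem:sepPerturb}. Your multiplicative bound from Step 1 is never needed.
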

       This implies that all meaningful measures of entanglement, including $E_F$, $E_D$, $E_C$, and $E_{RE}$ are exactly zero on $\rho_{AC}$, and further, that the state can be prepared using only local operations and classical communication.

       We also note that this theorem holds at \emph{any fixed} temperature $1/\beta$ with no phase transition. However, as will be clear from the techniques used, the lengthscale $\ell$ grows with $\beta$. While we do not spell out the details of this dependence, the choice of techniques based on \cite{araki1969,perezgarcia2023} should result in a dependence of the form $\exp(\exp(\cO(\beta)))$.

       Since our result holds uniformly in system size, we can state the following corollary on the thermodynamic limit.
       \begin{corollary}[Informal version of Corollary~\ref{cor:KMS}]
               For a uniformly bounded finite-range interaction, consider the infinite system KMS-state $\omega$ on the quasilocal algebra and the partial trace $\tr_\Lambda$.
               There exists $\ell\in\NN$ such that
               \[
               \omega\circ\tr_{[-\ell,0]}
               \]
               is a separable state between the left and right half-chain.
       \end{corollary}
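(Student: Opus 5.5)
The corollary should follow from the finite-volume Theorem~\ref{thm:mainTec} by a limiting argument. The key point is that the entanglement length $\ell$ there is uniform in the sizes of $A$ and $C$.

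First fix $\ell$ as given by Theorem~\ref{thm:mainTec} for the given $\beta$ and interaction, and set $B=[-\ell,0]$. For $n\in\NN$ let $A_n=[-n,-\ell-1]$ and $C_n=[1,n]$. Let $\rho_n$ be the Gibbs state of $H_{A_nBC_n}$. By Theorem~\ref{thm:mainTec}, $\tr_B\rho_n$ is separable between $A_n$ and $C_n$ for every $n$.

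Second, I need convergence to the KMS state. For a uniformly bounded finite-range interaction in one dimension, Araki's results \cite{araki1969,araki1975} give a unique KMS state $\omega$. Moreover, the finite-volume Gibbs states converge weak-$*$ to $\omega$ on local observables: any weak-$*$ limit point of the local Gibbs states is KMS, and by uniqueness every subsequence has the same limit. Hence for each fixed finite $m$, the marginals $\sigma_n^{(m)}$ of $\tr_B\rho_n$ on $[-m,-\ell-1]\cup[1,m]$ converge to the corresponding marginal of $\omega\circ\tr_{[-\ell,0]}$.

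Third, I transfer separability to the limit.
\begin{itemize}
\item Partial traces over parts of $A_n$ and $C_n$ are local operations, so they preserve separability. Thus each $\sigma_n^{(m)}$ is separable.
\item The set of separable states on a finite-dimensional space is closed, so the limit marginal is separable for every $m$.
\end{itemize}

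It remains to define separability of the infinite state, presumably as a weak-$*$ limit (or closed convex hull) of convex combinations of product states $\omega_L\otimes\omega_R$ on the left and right quasilocal algebras. Here I would use compactness. Each finite marginal is a finite convex combination of product states. I would extend each factor arbitrarily to a state on its half-chain algebra, for instance by tensoring with a fixed product state. This gives separable states $\tilde\omega_m$ on the full algebra that agree with $\omega\circ\tr_{[-\ell,0]}$ on $[-m,m]$. Then $\tilde\omega_m\to\omega\circ\tr_{[-\ell,0]}$ weak-$*$, which exhibits the target state as a limit of separable states.

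The main obstacle is this final step, matching the paper's precise infinite-dimensional notion of separability in Definition~\ref{def:sep}. If that definition requires a genuine integral decomposition $\int \omega_L\otimes\omega_R\,d\mu$ rather than a closure, I would proceed as follows. Take measures $\mu_m$ on the weak-$*$ compact product of the two half-chain state spaces, extract a weak-$*$ convergent subsequence by compactness (Prokhorov/Banach–Alaoglu), and check on local observables that the limit measure represents $\omega\circ\tr_{[-\ell,0]}$. This works because product functionals of local observables are continuous on the state spaces.
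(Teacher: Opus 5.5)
Your proposal is correct and uses the same ingredients as the paper's proof: the size-uniform Theorem~\ref{thm:mainTec}, convergence of the finite-volume Gibbs states to the unique KMS state (limit points are KMS, plus Araki's uniqueness), and closedness of the separable set. The infinite-volume notion in the paper is the weak-$*$ closure of the convex hull of product states (following \cite{keyl2006}), so your construction of the $\tilde\omega_m$ already suffices and the integral-decomposition fallback is unnecessary.

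The only difference is the order of limits. You first pass to the limit on finite marginals, using finite-dimensional closedness (Lemma~\ref{lem:sepStruct}). You then rebuild separable approximants that agree with $\omega\circ\tr_{[-\ell,0]}$ on $\cA_{[-m,m]}$; on the block $[-\ell,0]$ this extension must be the tracial state rather than an arbitrary one. The paper instead extends each $\rho^{[-m,m]}$ explicitly by the tracial state and shows that each $\omega^m\circ\tr_{[-\ell,0]}$ is a finite convex combination of product states on $\cA$. It then invokes weak-$*$ closedness directly.
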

       To avoid technical details, we do not discuss this further here and refer to Section~\ref{sec:KMS} for definitions and details.

       Before proceeding to an outline of our proof, let us describe the notion of the \emph{death of entanglement} and compare it to our result.
       While not strictly defined, the term was first coined in the context of continuous-time noise channels applied to an entangled state of two atoms in cavities \cite{yu2004,yu2009}.
       Unavoidably, the correlations between the atoms decay exponentially due to the noise.
       What is somewhat more striking is that the entanglement decays not just exponentially.
       It disappears after a finite time $t_{ESD}$.
       A simplified explanation of this effect is that there is an open ball around the maximally mixed state, which consists of separable states.
       As the time-evolved state approaches the center of this ball, eventually, after a finite time, it has to lie strictly within the ball and becomes separable.

       In a totally different context, \cite{bakshi2024} recently showed that a conceptually similar phenomenon happens in high-temperature Gibbs states.
       The infinite temperature state is the maximally mixed state, and a small perturbation leaves that state separable\footnote{Their result holds in an even stronger sense, proving full multipartite separability.}.
       A caveat is that the size of the ball may depend on the dimension, and it is a highly nontrivial insight of this work to show that the temperature at which the states become separable does not grow with system size.
       Note that the high-temperature regime in which this result holds is the trivial phase, just as one-dimensional systems only possess one phase~\cite{araki1969} across all temperatures, so another natural question answered by our result is:

       \begin{center}
               \textit{Is there an analogous death of entanglement at \\any temperature for 1D Gibbs states?}
       \end{center}

       Clearly, in our setting, we cannot reproduce a result that proves separability between arbitrary pairs of spins, since simple examples of suitable interactions will yield entanglement between neighbouring spins.
       Instead, we view our result as another, novel notion of this sudden death.
       Rather than with time or temperature, we show the separability at a sufficiently large distance, which one may dub a \emph{spatial sudden death of entanglement}.

       \paragraph{Proof outline}
       We discuss the main ideas and steps of the proof of Theorem~\ref{thm:mainTec}.
       A starting point is similar to the idea already used in \cite{bakshi2024}, that for sufficiently small perturbations to the maximally mixed state, the state remains separable as long as the strength of the perturbations decays exponentially with the size of their support at a sufficiently large rate.
       This strategy is inapplicable to our case in several ways:
       \begin{itemize}
               \item We cannot tune the temperature up, so we are not close to the maximally mixed state.
       \end{itemize}
       We overcome this first barrier by combining two results: the exponential decay of correlations in the form of an \emph{approximate factorization}
       \begin{equation}\label{eq:appFac}
               \rho_{AC}\approx \rho_A\otimes\rho_C
       \end{equation}
and the \emph{quantitative faithfulness} of marginals, i.e., lower bounds on their smallest eigenvalues $\rho_{A/C}\succeq \alpha\id>0$.
In particular, the latter implies that, albeit far from the maximally mixed state, the marginals possess a positive decomposition with a maximally mixed component, which remains separable when adding the deviation from the product state due to the only approximate Equation~\eqref{eq:appFac}.
       This result forms Proposition~\ref{prop:constACsep}, which already implies the separability for constant sizes of the regions $A$ and $C$.
       Interestingly, 1D is not a crucial assumption for this statement as long as uniform faithfulness and decay of correlations can be proven.

       However, the above argument is insufficient to extend to arbitrary system sizes for two reasons:
       \begin{itemize}
               \item The exponential decay of the approximate factorization is optimal, and the rate goes to  $0$ as $\beta$ increases.
       \end{itemize}
       Again, this is true due to classical correlations.
       A calculation using transfer matrices for a simple classical nearest-neighbour Ising chain yields the optimal exponential decay of correlations as well as mutual information, which lower bounds the error of the approximate factorization. Moreover:

       \begin{itemize}
               \item The exponentially small faithfulness $\alpha$ is optimal for a rate $\log(d)+\Omega(\beta)$.
       \end{itemize}
       This lower bound is already satisfied for a noninteracting spin chain.

       To overcome this issue, we show an additional quasilocal substructure of the deviation from the product state $\rho_{AC}-\rho_A\otimes\rho_C$: 
       It splits into a term with small variable support size $k_0$, which is exponentially decaying in $|B|$, and an additional tail of terms with larger supports $\Delta_k$, $k\ge k_0$.
       For the former part, we also find a decomposition into a separable and a maximally mixed component $\Gamma+\gamma\id$.
       Only for the additional tail, we are then able to show an even stronger, superexponential decay in their support size and $|B|$.
       For any given temperature, by increasing the support of the first part, this decay rate of $\Delta_k$ can now exceed the decay rate of $\gamma$.
       Viewing the tail as a perturbation of the maximally mixed component, their sum $\gamma\id+\sum_k\Delta_k$ remains separable.
       By combining all these separable contributions, the theorem follows.

       \paragraph{Related work}
       The question of the decay of bipartite entanglement has been studied in \cite{kuwahara2022a,kuwahara2024}. In particular, the latter shows the decay of the squashed entanglement in one and higher dimensions, though the higher-dimensional case is restricted to constant-sized subsystems.
       In fact, due to the lifted translation-invariance assumption in the decay of correlations \cite{bergamaschi2025} in one dimension, this is by now subsumed by the decay of mutual information \cite{bluhm2022}, see Lemma~\ref{lem:doC}.
       However, while addressing meaningful measures of entanglement, all of these prior bounds prove only decay and not the stronger sudden death of entanglement.

       While preparing this manuscript, we became aware of \cite{bakshi2026}, which proves a strictly finite Schmidt rank for finite-range thermal states in one dimension.
       While the concept of proving a strictly finite property of the entanglement, where generically one might only expect a decaying spectrum, seems related, there is no technical connection between the statements in their work and ours.
       While the Schmidt rank bounds the amount of entanglement between neighbouring regions, our result concerns the decay with distance.
       In fact, the authors mention the open question of the entanglement lengthscale, which our work answers.

       \paragraph{Discussion and outlook}
       In this paper, we prove the sudden death of entanglement in one-dimensional Gibbs states.
       Tracing out a finite subsystem from a chain, the state becomes \emph{exactly} separable between the two halves, despite inevitable quantum correlations between close neighbours.
       To the best of our knowledge this is the first nontrivial example of such a \emph{spatial} death of entanglement.
       
       There are a number of interesting open questions:
       While in our setting of one-dimensional Gibbs states, classical and quantum correlations are extensively classified, it remains an open question whether building on our techniques a superexponential decay of the conditional mutual information can be shown.
       While its exponential decay rate is proven in \cite{kuwahara2024}, no lower bound is known and in particular, in the classical setting the stronger Hammersley-Clifford theorem implies exact Markovianity beyond the interaction range.

       Further, regarding the death of entanglement, it would be interesting to prove analogous results in higher-dimensional low-temperature systems or ground states.
       While this clearly goes beyond the scope of our techniques, features like the monogamy of entanglement that rule out arbitrarily strong entanglement between multiple particles might give a hint in this direction.
       For ground states, while their marginals are low-rank, some perturbations of low-rank separable states can still remain separable.
       An indication, that the ground state regime is not hopeless is given by the following trivial example:
       States prepared by constant-depth quantum circuits exhibit the death of entanglement, but maintain short-range entanglement.

       \paragraph{}\emph{Outline} In Section~\ref{sec:prelim}, we introduce the notation and prior and elementary results needed for the proof, which follows in Section~\ref{sec:proof} split into the constant-size result Proposition~\ref{prop:constACsep} and the main Theorem~\ref{thm:mainTec}.
       In Section~\ref{sec:KMS}, we discuss the thermodynamic limit and prove Corollary~\ref{cor:KMS}.

       \section{Preliminaries}\label{sec:prelim}
In this section, we introduce the setup and notation as well as some elementary or known results needed for the proof.

For a finite-dimensional Hilbert space $\cH$, we denote the set of bounded operators $\cB(\cH)$. 
A quantum state is a normalized positive semidefinite operator $\rho\in\cB(\cH)$, $\rho\succeq0$, $\Tr[\rho]=1$, where $\succeq$ denotes the positive semidefinite  order.
$\Tr[\cdot]$ denotes the trace, $\|\cdot\|$ the operator norm, and $\|X\|_1=\Tr[\sqrt{X^*X}]$ the trace norm.
For two positive quantum states, the Umegaki relative entropy is defined as $D(\rho\|\sigma)=\Tr[\rho(\log(\rho)-\log(\sigma))]$.
\subsection{Separable states}

\begin{definition} \label{def:sep}
        A positive semidefinite matrix $X\in \cB(\CC^{d_A})\otimes\cB(\CC^{d_C})$ is called separable if there exists a decomposition
        \[
                X=\sum_{i=1}^r X_i^A\otimes X_i^C\,,
        \]
        where $X_i^A\in\cB(\CC^{d_A})$, $X_i^C\in\cB(\CC^{d_C})$ and $X_i^A,X_i^C\succeq0$.
        The set of separable states in finite dimensions is the set of separable matrices $X$ with $\Tr[X]=1$.
        A matrix or state that is not separable is called entangled.
\end{definition}
A famous criterion for entanglement is the PPT criterion: From the above decomposition, it is evident that transposing one subsystem of a separable state results in a (separable) state, where the linear transposition map is $(\cdot)^{T_B}:\ket i\bra j\otimes\ket k\bra l\mapsto\ket i\bra j\otimes\ket l\bra k$.
However, for some states $\rho^{T_B}$ is not positive semidefinite, which, by contradiction, proves that they are entangled. 
The criterion, however, is not faithful except for dimensions $2\times 2$ and $2\times 3$ \cite{woronowicz1976}.
While characterization of bipartite pure state entanglement is well understood, the landscape of mixed state entanglement measures is fragmented and depends on the specific application as reviewed in \cite{horodecki2009}.
As our proof of exact separability works in an elementary way with Definition~\ref{lem:sepStruct}, we do not go into detail.

We collect a number of elementary properties of separable operators in finite dimensions, which we will use repeatedly.
\begin{lemma}\label{lem:sepStruct}
        The set of separable matrices is a closed cone, i.e., it is closed under linear combinations with positive coefficients.
        The set of separable states is a closed conic section.
        Further, if $X_{AC}\in\cB(\CC^{d_A})\otimes\cB(\CC^{d_C})$ is separable, so is
        \[
                (Y_A\otimes Y_C)X(Y_A\otimes Y_C)^\dagger
        \]
        for any $Y_A\in\cB(\CC^{d_A})$, $Y_C\in\cB(\CC^{d_C})$.
\end{lemma}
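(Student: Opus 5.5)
The plan is to verify the three claims directly from Definition~\ref{def:sep}, handling the cone property and the conjugation invariance algebraically and treating closedness by a compactness argument.

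\emph{Cone.} Take separable $X=\sum_i X_i^A\otimes X_i^C$ and $X'=\sum_j X_j'^A\otimes X_j'^C$, and coefficients $a,b\ge0$. Concatenating the two decompositions, and absorbing $a$ into $X_i^A$ and $b$ into $X_j'^A$ (which keeps them positive semidefinite), writes $aX+bX'$ again as a finite sum of tensor products of positive semidefinite factors. So the separable matrices form a convex cone.

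\emph{Conjugation.} Apply $(Y_A\otimes Y_C)\,\cdot\,(Y_A\otimes Y_C)^\dagger$ term by term. Each term becomes $(Y_AX_i^AY_A^\dagger)\otimes(Y_CX_i^CY_C^\dagger)$. Both factors are positive semidefinite, since $\bra{v}YZY^\dagger\ket{v}=\bra{Y^\dagger v}Z\ket{Y^\dagger v}\ge0$ whenever $Z\succeq0$. Hence the conjugated operator is separable.

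\emph{Closedness.} This is the only step needing an argument beyond bookkeeping. I would first normalise each term: write every nonzero $X_i^A$ and $X_i^C$ as a positive multiple of a density matrix. Then decompose each density matrix spectrally into pure states and expand the tensor products. This shows that a separable matrix $X$ with $\Tr X=t$ is $t$ times a convex combination of pure product states $\ket{a}\bra{a}\otimes\ket{c}\bra{c}$. The set $P$ of such pure product states is the image of the compact set (unit sphere)$\times$(unit sphere) under a continuous map, so $P$ is compact. Separable matrices live in a real vector space of dimension $D=(d_Ad_C)^2$, so by Carath\'eodory every convex combination of elements of $P$ uses at most $D+1$ points. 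The convex hull of $P$, which is the set of separable states, is therefore the continuous image of the compact set $\Delta_{D}\times P^{D+1}$, where $\Delta_D$ is the probability simplex. Hence the separable states form a compact, in particular closed, set.

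The separable matrices are then $\{tS: t\ge0,\ S \text{ a separable state}\}$. This set is closed: if $t_nS_n\to X$, then $t_n=\Tr[t_nS_n]\to\Tr X$ is bounded, so we may pass to a subsequence with $t_n\to t$ and $S_n\to S$ separable. It follows that $X=tS$, which is separable. Finally, the set of separable states is the intersection of this closed cone with the affine hyperplane $\{\Tr X=1\}$, so it is a closed conic section.

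The main obstacle is only the closedness claim. The key point there is to bound the number of terms via Carath\'eodory, so that compactness can be used instead of dealing with unbounded decompositions. The remaining claims follow immediately from the definition.
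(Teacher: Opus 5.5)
Your proof is correct and takes essentially the same route as the paper. In both, the cone and conjugation properties follow directly from the definition, and closedness comes from a Carath\'eodory bound on the number of product terms combined with compactness in finite dimensions. The only difference is in the setup for closedness. You first show the separable states are compact, as the convex hull of the compact set of pure product states, and then rescale. The paper instead directly extracts a convergent subsequence from the trace-normalized decompositions $\sum_i p_{k,i}(X_k)_i^A\otimes(X_k)_i^C$ of a convergent sequence of separable matrices.
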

\begin{proof}
        The cone property is immediate from the definition.
        To show that it is closed, consider a sequence of separable states $X_k$ that converges to $X\in\cB(\CC^{d_A})\otimes \cB(\CC^{d_C})$.
        By Carath\'eodory's theorem, we can always choose $r=(d_Ad_B)^2$ in Definition~\ref{def:sep}, so we can write
        \[
                X_k=\sum_{i=1}^r p_{k,i} (X_k)_i^A\otimes (X_k)_i^C
        \]
        with $\Tr[(X_k)_i^A]=\Tr[(X_k)_i^C]=1$ and $p_{k,i}\ge0$. Since $X_k$ converges, it is bounded, and so
        $p_{k,i}=\left\|p_{k,i} (X_k)_i^A\otimes (X_k)_i^C\right\|_1\le\|X_k\|_1$ is bounded too.
        Now, all of $p_{k,i}$, $(X_k)_i^A$, and $(X_k)_i^C$ are bounded sequences in finite dimensions, and hence we can descend to a subsequence $k_l$, for which all of them converge, and
        \[
                X=\lim_{k\to\infty} X_k=\lim_{l\to\infty} \sum_{i=1}^r p_{k_l,i} (X_{k_l})_i^A\otimes (X_{k_l})_i^C=\sum_{i=1}^r p_i X_i^A\otimes X_i^C
        \]
        is separable.
        The set of separable states is the intersection of the separable matrices with the closed affine space defined by $\Tr[x]=1$, and therefore a closed conic section.
        Since $X\succeq0$ implies $YXY^\dagger\succeq0$ for any matrix $Y$, the last part follows from the definition of a separable matrix.
\end{proof}
Since the set of separable matrices and states are convex and thereby (path-)connected and closed, they are not open (in the subspace topology).
An explicit example is as follows:
The sequence of states $(1-1/n)\ketbra{00}{00}+(\ket{00}+\ket{11})(\bra{00}+\bra{11})/2n$ converges to the separable state $\ketbra{00}{00}$ but for each $n$ does not satisfy the PPT criterion and is therefore entangled.

However, there exists an open ball around the identity in which all states are separable as stated in the following Lemma, which is central to our main theorem.
The optimal dimension dependent size of such balls in various norms has been studied in \cite{Gurvits_2002} and we adopt the following result. 
\begin{lemma}[{\cite[Theorem 2]{Gurvits_2002}}]\label{lem:sepPerturb}
        For $\Delta\in\cB(\CC^{d_A})\otimes\cB(\CC^{d_C})$ with $\|\Delta\|\le 1/\sqrt{d_Ad_C}$ 
        \[
                \id_{AC}+\Delta
        \]
        is separable between $A$ and $C$.
\end{lemma}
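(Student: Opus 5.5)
Since the Lemma is quoted from \cite{Gurvits_2002}, the plan is to reconstruct its proof rather than find a new one. Throughout I take $\Delta$ to be Hermitian, which is implicit since $\id_{AC}+\Delta$ must be positive semidefinite.

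\textbf{Reduction to the Frobenius ball.} By comparison of norms,
\[
\|\Delta\|_2\le\sqrt{d_Ad_C}\,\|\Delta\|\le 1 .
\]
So it suffices to show the sharper Gurvits--Barnum statement: every Hermitian $X$ with $\|X-\id_{AC}\|_2\le1$ is separable.

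\textbf{Duality.} By Lemma~\ref{lem:sepStruct}, the separable matrices form a closed convex cone. The bipolar theorem then says $X$ is separable iff $\Tr[XW]\ge0$ for every Hermitian $W$ in the dual cone. This dual cone consists of the block-positive $W$, i.e.\ those with $\bra{a}\bra{b}W\ket{a}\ket{b}\ge0$ for all $\ket a\in\CC^{d_A}$, $\ket b\in\CC^{d_C}$. For $X=\id+\Delta$, Cauchy--Schwarz gives
\[
\Tr[XW]\ \ge\ \Tr[W]-\|\Delta\|_2\|W\|_2\ \ge\ \Tr[W]-\|W\|_2 .
\]
Hence the whole statement reduces to the inequality
\[
\|W\|_2\le \Tr[W]\qquad\text{for every block-positive } W .
\]

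\textbf{Proving the inequality.} Here $\Tr[W]\ge0$, because in any product basis all diagonal entries of $W$ are nonnegative. My first attempt would use Haar-random product vectors, with $f(a,b)=\bra{ab}W\ket{ab}\ge0$.
\begin{itemize}
\item The twirling identity $\mathbb{E}[\ketbra{a}{a}^{\otimes2}]=(\id+F_A)/(d_A(d_A+1))$, and its analogue on $C$ (with $F$ the swap), gives $\mathbb E f=\Tr W/(d_Ad_C)$.
\item The same identity gives $\mathbb E f^2$ as a positive combination of $(\Tr W)^2$, $\Tr W_A^2$, $\Tr W_C^2$ and $\Tr W^2$.
\item Completing $\ket{ab}$ to a product basis gives $f\le \Tr W$, because all the other diagonal entries are nonnegative.
\end{itemize}
Then $\mathbb E f^2\le \max f\cdot\mathbb E f\le (\Tr W)^2/(d_Ad_C)$. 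This yields
\[
\Tr W^2\le (d_A+1)(d_C+1)(\Tr W)^2 .
\]

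\textbf{Main obstacle.} The bound above has the wrong constant, so this naive averaging does not suffice. Getting the sharp constant $1$ is the real difficulty. To close the gap I would follow Gurvits--Barnum's actual argument: view $W$ through its associated positive map, and use that a positive, trace-normalized map cannot increase the Hilbert--Schmidt norm of rank-one inputs beyond the trace. I expect an induction or convexity argument over extremal block-positive $W$ to give $\|W\|_2\le\Tr W$ exactly. If the sharp constant proves elusive, the weaker inequality still gives the Lemma with $1/\sqrt{d_Ad_C}$ replaced by $1/\sqrt{d_Ad_C(d_A+1)(d_C+1)}$. Any bound of the form $1/\mathrm{poly}(d_Ad_C)$ should still suffice for the superexponential-decay argument in the main theorem.
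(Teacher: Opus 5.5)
Your reduction is correct and is the natural route to the cited result. The paper gives no argument of its own: it imports the Gurvits--Barnum theorem that the Frobenius ball $\|X-\id_{AC}\|_2\le1$ is separable. Under your duality step, that theorem is \emph{equivalent} to $\|W\|_2\le\Tr[W]$ for every block-positive $W$, and the swap operator shows this is tight. So everything hinges on the one step you do not prove. Your product-vector averaging only gives $\|W\|_2\le\sqrt{(d_A+1)(d_C+1)}\,\Tr[W]$. Your closing plan (an induction or convexity argument over extremal block-positive $W$) is not an argument: extremal block-positive matrices, i.e.\ extremal positive maps, have no usable description in general dimensions. As written, you therefore prove the lemma only under $\|\Delta\|\le1/\sqrt{d_Ad_C(d_A+1)(d_C+1)}$. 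You are right that this weaker form would still carry the paper. It only replaces $d^{-k}$ by roughly $d^{-2k}$ in Proposition~\ref{prop:constACsep} and in the condition on $\|\Delta_k\|$, and the linear choice of $\ell_1$ and the superexponential decay absorb that. But it is not the stated lemma.

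The missing idea is small. The loss comes from randomizing both factors and using the crude bound $f\le\Tr[W]$. Instead, randomize only $a$ and use positivity of the whole block $W_a:=(\bra{a}\otimes\id_C)W(\ket{a}\otimes\id_C)\succeq0$. This is essentially the rank-one fact you allude to at the end, and it gives $\Tr[W_a^2]\le(\Tr[W_a])^2$ pointwise. Write $W_A=\tr_C[W]$ and $W_C=\tr_A[W]$. The same twirling identity then gives
\[
\mathbb{E}_a\Tr[W_a^2]=\frac{\Tr[W_C^2]+\Tr[W^2]}{d_A(d_A+1)}\,,\qquad \mathbb{E}_a(\Tr[W_a])^2=\frac{(\Tr[W])^2+\Tr[W_A^2]}{d_A(d_A+1)}\,,
\]
hence $\Tr[W_C^2]+\Tr[W^2]\le(\Tr[W])^2+\Tr[W_A^2]$. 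Now condition on a Haar-random $b\in\CC^{d_C}$ instead, using $W^b:=(\id_A\otimes\bra{b})W(\id_A\otimes\ket{b})\succeq0$. This gives the same inequality with the roles of $W_A$ and $W_C$ interchanged. Adding the two cancels the marginal terms and yields $\Tr[W^2]\le(\Tr[W])^2$, i.e.\ $\|W\|_2\le\Tr[W]$ since $\Tr[W]\ge0$. With this inequality, your duality argument closes with the sharp constant $1/\sqrt{d_Ad_C}$.
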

Since the above statement trivially continues to hold when tensoring with the identity, the dimension can be restricted to  the dimension of the \emph{support} of the operator $\Delta$ in a multipartite Hilbert space.
Note that in \cite{bakshi2024} a version of this statement for qubits even proves multipartite separability.
In the context of our work nearest-neighbour entanglement is naturally unavoidable, so we restrict ourselves to the above bipartite statement with slightly better dimension dependence.
\subsection{1D Gibbs states}
We consider a spin system on the chain $\ZZ$ and consider the local algebra $\cA_0=\cup_{n\in\NN} \cA_{[-n,n]}$, where $\cA_{X}=\otimes_{i\in X}\cA_i$, $\cA_i\cong\cB(\CC^d)$ and up to the natural embedding, i.e., for $\Lambda\subset\Lambda'$, $X_\Lambda\in\cA_\Lambda$ is identified with $X_\Lambda\otimes \id_{\Lambda'\setminus\Lambda}$.
Its closure $\cA=\overline{\cA_0}$ in operator norm is the quasilocal algebra.
$\id$ denotes the identity matrix.
We denote the trace on a region $\Lambda$, by $\Tr_\Lambda:\cA_\Lambda\to \CC$ and omit the subscript when clear from the context.
For $X\subset\Lambda$, $\tr_X:\cA_\Lambda\to\cA_{\Lambda\setminus X}$ denotes the (unnormalized) partial trace, i.e., $\tr_X[\id_\Lambda]=d^{|X|}\id_{\Lambda\setminus X}$.
For an operator $X\in\cA_0$, its support is the smallest set $\Lambda$ such that $X\in\cA_\Lambda$.

For $\cP_0(\ZZ)$ the set of finite subsets of $\ZZ$, let $\Phi:\cP_0(\ZZ)\to\cA_0$ be a local interaction, i.e., such that $\Phi(X)\in\cA_X$.
We assume that it has a finite-range $r$ such that $\Phi(X)=0$ if $\diam(X)>r$ and bounded strength $J$
\[
        J=\sup_{i\in\ZZ} \sum_{X:\;i\in X} \|\Phi(X)\|
\]
Throughout the paper, when writing that a term is constant, we mean that it is an explicit function only of $J$, $r$, and $d$, but not any system size.
For any region $A$, we define
\[
        H_A=\sum_{X\subset A}\Phi(X)
\]
\begin{figure}[ht!]
        \centering
        \begin{tikzpicture}[
    site/.style={circle, draw, fill=black, minimum size=3mm, inner sep=0pt},
    every node/.style={font=\small}
]

\def\nA{4}
\def\nB{5}
\def\nC{4}
\def\k{2}

\pgfmathtruncatemacro{\N}{\nA+\nB+\nC}

\foreach \i in {1,...,\N} {
    \node[site] (s\i) at (\i,0) {};
}

\pgfmathtruncatemacro{\startB}{\nA+1}
\pgfmathtruncatemacro{\endB}{\nA+\nB}
\pgfmathtruncatemacro{\startC}{\endB+1}

\pgfmathtruncatemacro{\startPk}{\startB-\k}
\pgfmathtruncatemacro{\endPk}{\endB+\k}

\draw[dashed]
    ($(s\nA.east)!0.5!(s\startB.west)+(0,-.5)$) -- ++(0,1.0);
\draw[dashed]
    ($(s\endB.east)!0.5!(s\startC.west)+(0,-.5)$) -- ++(0,1.0);


\draw[decorate,decoration={brace,mirror,amplitude=4pt}]
    ($(s1.south)+(-0.2,-0.4)$) -- ($(s\nA.south)+(0.2,-0.4)$)
    node[midway,below=5pt] {$A$};

\draw[decorate,decoration={brace,mirror,amplitude=4pt}]
    ($(s\startB.south)+(-.2,-0.4)$) -- ($(s\endB.south)+(0.2,-0.4)$)
    node[midway,below=5pt] {$B$};

\draw[decorate,decoration={brace,mirror,amplitude=4pt}]
    ($(s\startC.south)+(-.2,-0.4)$) -- ($(s\N.south)+(.2,-0.4)$)
    node[midway,below=5pt] {$C$};


\draw[decorate,decoration={brace,amplitude=4pt}]
    ($(s\startPk.north)+(-.2,0.6)$) -- ($(s\endPk.north)+(0.2,0.6)$)
    node[midway,above=5pt] {$\partial^k B$};

\end{tikzpicture}
        \caption{\label{fig:regions}Graphical representation of the regions $ABC$.}
\end{figure}
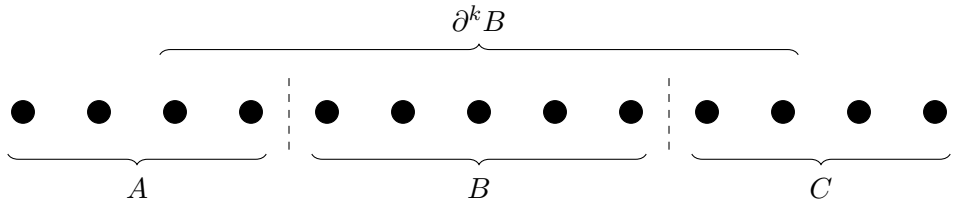
We consider a finite interval $ABC\subset \ZZ$ with $A$, $B$, and $C$ intervals with $A$ preceding and $C$ succeeding $B$.
We use the following notation for the $k$-neighbourhood of $B$: $\partial^k B=\{k\in\Lambda|d(k,B)\le k\}$, see Figure~\ref{fig:regions}.

We denote the Gibbs state on a region $\Lambda$ by $\rho^\Lambda=\exp(-H_\Lambda)/Z_\Lambda$, $Z_\Lambda=\Tr_\Lambda[\exp(-H_\Lambda)]$
and its marginals $\rho_X^\Lambda=\tr_{\Lambda\setminus X}[\rho^\Lambda]$, where we omit the superscript when clear from the context.
Here, we choose the inverse temperature $\beta=1$ as it can be absorbed into the interaction strength $J$.
Note that $\rho^\Lambda$ will sometimes be considered as an operator in $\cA_0$ but it is only normalized in $\cA_\Lambda$.
If not noted otherwise $\rho=\rho^{ABC}$.
We use the shorthand $\rho^k=\rho^{\partial^k B}$, $H^k_X=H_{X\cap\partial^k B}$, and $Z_{k_0}=Z_{\partial^k B}$.

We need the following standard Lemma on the growth of partition functions.
\begin{lemma}\label{lem:partFuncRatio}
        For Hamiltonians $H,H'\in\cB(\CC^d)$,
        \begin{equation}\label{eq:Z1}
        \frac{\Tr[\exp(-H')]}{\Tr[\exp(-H)]}\le\exp\left(\left\|H-H'\right\|\right).
\end{equation}
In particular, for an interaction of strength $J$ and $Z_\Lambda$ as defined above and any region $A$
\begin{equation}\label{eq:Z2}
        \exp((\log(d)-J)|A|)\le Z_A\le\exp((\log(d)+J)|A|)
\end{equation}
and for neighbouring intervals $A$, $B$,
\begin{equation}\label{eq:Z3}
        \exp(-rJ)\le\frac{Z_AZ_B}{Z_{AB}}\le\exp(rJ)\,.
\end{equation}
\end{lemma}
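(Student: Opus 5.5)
For \eqref{eq:Z1}, I will compare the two exponentials through the operator inequality $-H'\preceq -H+\|H-H'\|\id$. Since $X\mapsto\Tr[\exp(X)]$ is monotone with respect to $\preceq$ (for $X\preceq Y$ the eigenvalues satisfy $\lambda_i(X)\le\lambda_i(Y)$ by Weyl's inequality, and $\exp$ is increasing), this gives
\[
\Tr[\exp(-H')]\le\Tr[\exp(-H+\|H-H'\|\id)]=\exp(\|H-H'\|)\Tr[\exp(-H)].
\]
Dividing by $\Tr[\exp(-H)]>0$ yields \eqref{eq:Z1}. Golden--Thompson is not needed: the identity term commutes with $H$, and the monotonicity of the trace of an increasing function is elementary.

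For \eqref{eq:Z2}, I apply \eqref{eq:Z1} with $H'=H_A$ and $H=0$ on $\cA_A$, and then with the roles exchanged. The reference quantity is $\Tr_A[\exp(0)]=d^{|A|}$, so $Z_A$ lies between $d^{|A|}\exp(-\|H_A\|)$ and $d^{|A|}\exp(\|H_A\|)$. It remains to bound $\|H_A\|$. By the triangle inequality,
\[
\|H_A\|\le\sum_{X\subset A}\|\Phi(X)\|\le\sum_{i\in A}\sum_{X\ni i}\|\Phi(X)\|\le J|A|,
\]
where the middle step holds because each nonempty $X$ is counted at least once, through any of its sites. Combining these gives exactly the stated bounds $\exp((\log(d)\pm J)|A|)$.

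For \eqref{eq:Z3}, note that $Z_AZ_B=\Tr_{AB}[\exp(-H_A-H_B)]$, since $H_A$ and $H_B$ act on disjoint tensor factors and $\exp(-H_A\otimes\id-\id\otimes H_B)=\exp(-H_A)\otimes\exp(-H_B)$. I then apply \eqref{eq:Z1} in both directions with $H=H_{AB}$ and $H'=H_A+H_B$. The difference $H_{AB}-H_A-H_B=\sum\Phi(X)$ runs over those $X\subset AB$ that meet both $A$ and $B$. Such an $X$ has diameter at most $r$, so it contains a site of $A$ within distance $r$ of the boundary between $A$ and $B$; there are at most $r$ such sites (taking the convention $\diam(X)\le r$, meaning at most $r+1$ consecutive sites, so any boundary-crossing $X$ contains one of the $r$ sites of $A$ nearest the cut). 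Grouping the crossing terms by such a site and using the definition of $J$ gives $\|H_{AB}-H_A-H_B\|\le rJ$, which yields \eqref{eq:Z3}.

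The only delicate step is this counting of boundary-crossing terms. If the paper's convention for diameter allowed one more site, the constant would become $(r+1)J$, but that does not affect any later use, since only constants matter there.
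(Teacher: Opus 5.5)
Your proof is correct, and for \eqref{eq:Z2} and \eqref{eq:Z3} it follows the paper: the same choices of $H,H'$ in both directions, and the same boundary count. You use the last $r$ sites of $A$ where the paper uses the first $r$ sites of $B$, which is equivalent. You also make explicit two steps the paper leaves implicit, namely $\|H_A\|\le J|A|$ and $Z_AZ_B=\Tr_{AB}[e^{-H_A-H_B}]$. Your diameter hedge is unnecessary: with $\diam(X)\le r$, the crossing terms are covered by exactly the $r$ sites you identified.

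The real difference is in \eqref{eq:Z1}.
\begin{itemize}
\item The paper interpolates with $f(s)=\Tr[e^{-H+s(H-H')}]$. It computes $f'(s)=\Tr[(H-H')e^{-H+s(H-H')}]\le\|H-H'\|f(s)$ by Duhamel's formula and then integrates with Gr\"onwall.
\item You use the single operator inequality $-H'\preceq -H+\|H-H'\|\id$. You combine it with the monotonicity of $X\mapsto\Tr[e^X]$ under $\preceq$, via Weyl/Courant--Fischer eigenvalue monotonicity. The scalar shift then factors out of the trace.
\end{itemize}

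Your route is shorter and needs no calculus on matrix exponentials. The paper's interpolation gives the exact identity $\log\bigl(\Tr[e^{-H'}]/\Tr[e^{-H}]\bigr)=\int_0^1\Tr[(H-H')\sigma_s]\,ds$, where $\sigma_s$ is the normalized Gibbs state along the path. That identity would allow replacing $\|H-H'\|$ by sharper expectation-value bounds, but the paper never uses this refinement, so both routes give the same result here.
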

\begin{proof}
        We introduce the smooth function $f:[0,1]\to\RR$
        \[
                f(s)=\Tr[e^{-H+s(H-H')}].
        \]
        By Duhamel's formula
        \[
                f'(s)=\Tr[(H-H')e^{-H+s(H-H')}]\le\|H-H'\| f(s)
        \]
        and so by Gr\"onwall's inequality $f(1)\le f(0)\exp(\|H-H'\|)$, which proves \eqref{eq:Z1}.
        Equation~\eqref{eq:Z2} follows by choosing $H=H_A$, $H'=0$ and vice versa.
        Equation~\eqref{eq:Z3} follows by choosing $H=H_A+H_B$, $H'=H_{AB}$ and vice versa and noting that the terms in $H_{AB}$ that differ from the ones in $H_A+H_B$ have support containing one of the first $r$ sites of $B$, so $\|H_A+H_B-H_{AB}\|\le rJ$.

\end{proof}

In \cite{araki1969} the Araki expansionals were introduced and their locality structure analyzed.
The result allows to relate an (unnormalized) Gibbs state to a product of Gibbs states on decoupled subintervals and is based on Lieb-Robinson type bounds for the complex time evolution.
We use the following alternative formulation of Araki's results from a more recent work (see also \cite{perezgarcia2023}).
\begin{lemma}[{\cite[Corollary 3.4]{bluhm2022}}]\label{lem:arakiExp}
Let $\Phi$ be a local interaction of range $r$ and strength $J$, $s\in\CC$ with $|s|\le1$, and consider 
\[
        E_{X,Y}(s):=e^{-sH_{XY}}e^{sH_X+sH_Y} \in \cA_{X\cup Y}
\]
where $X,Y$ are two adjacent intervals.
Then there exists a constant $\cG>0$ only dependent on $r$ and $J$ but not on the size of $X$, $Y$ such that
\[
        \|E_{X,Y}(s)\|,\|E_{X,Y}(s)^{-1}\|\le\cG
\]
uniformly in the size of $X$ and $Y$.
Further, let $\tilde X$, $\tilde Y$ be (possibly empty) intervals immediately preceeding and succeeding $X$ and $Y$ respectively, and $|X|,|Y|\ge\ell$. Then for $|s|\le1$
\[
        \left\|E_{X,Y}(s)-E_{\tilde XX,Y\tilde Y}(s)\right\|,\left\|E_{X,Y}(s)^{-1}-E_{\tilde XX,Y\tilde Y}(s)^{-1}\right\|\le\frac{\cG^\ell}{(\lfloor\ell/r\rfloor+1)!}\,.
\]

\end{lemma}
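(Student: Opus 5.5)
The plan is to reduce everything to an ordinary differential equation in $s$ whose generator is the imaginary-time evolved boundary term. Set $V=H_{XY}-H_X-H_Y$. Only interactions $\Phi(Z)$ straddling the junction of $X$ and $Y$ contribute to $V$, so $V$ is supported on a window of at most $2r$ sites around the junction and $\|V\|\le rJ$. Differentiating $E(s)=E_{X,Y}(s)$ gives
\[
\frac{d}{ds}E(s)=-V(s)\,E(s),\qquad V(s):=e^{-sH_{XY}}\,V\,e^{sH_{XY}},\qquad E(0)=\id ,
\]
with the derivative taken along the ray from $0$ to $s$. The inverse $E(s)^{-1}=e^{-s(H_X+H_Y)}e^{sH_{XY}}$ satisfies the analogous equation $\frac{d}{ds}E(s)^{-1}=E(s)^{-1}V(s)$. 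By Gr\"onwall, both $\|E(s)\|$ and $\|E(s)^{-1}\|$ are at most $\exp\big(\int_0^{|s|}\sup_{|u|\le1}\|V(u)\|\,dt\big)$. The first claim therefore reduces to a bound on $\sup_{|u|\le 1}\|V(u)\|$ that is uniform in $|X|$ and $|Y|$.

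\textbf{Step 1 (main obstacle): a uniform bound on the complex-time evolution of $V$.} I would expand
\[
V(u)=\sum_{n\ge0}\frac{(-u)^n}{n!}\,\mathrm{ad}_{H_{XY}}^n(V),
\]
and write $\mathrm{ad}^n_{H_{XY}}(V)$ as a sum over ordered sequences $Z_1,\dots,Z_n$ of interaction supports. Each $Z_j$ must intersect the support accumulated from $V,Z_1,\dots,Z_{j-1}$, which in 1D is an interval. Naively, the number of admissible sequences grows like $n!$, which only cancels against $1/n!$ for small $|u|$.

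To get a bound valid for all $|u|\le1$, I would use Araki's one-dimensional trick. Split each step into two kinds: ``boundary'' steps, which enlarge the current interval on the left or right, and ``interior'' steps, which do not. Resum the interior steps into conjugations $e^{-uH_S}(\cdot)e^{uH_S}$ with $S$ the current interval, each of norm at most $e^{2|u|\|H_S\|}\le e^{2(J+\cdots)|S|}$. The boundary steps are forced to occur in a definite left/right order, so a sequence with $m$ boundary extensions carries a factor $|u|^m/m!$ and reaches support of size $\sim mr$. The resulting bound is
\[
\sum_m \frac{(c\,e^{c' r})^m}{m!}<\infty .
\]
This yields a constant $\cG'$ depending only on $r$ and $J$, and hence $\|E^{\pm1}\|\le \cG:=e^{\cG'}$. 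This combinatorial bookkeeping is where I expect essentially all the difficulty to lie. I would follow \cite{araki1969} and \cite{perezgarcia2023} closely rather than attempt to rederive the constants.

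\textbf{Step 2: locality of the expansional.} Replace $H_{XY}$ by $H_{\tilde XXY\tilde Y}$ and note that $V$ is unchanged. Let $\tilde V(u)$ denote the corresponding evolved operator. In the boundary-step expansion of Step 1, the two evolutions agree on every term whose accumulated support stays inside $XY$. A term can differ only after at least $\lfloor \ell/r\rfloor+1$ boundary extensions, since the support must travel distance at least $\ell$ from the junction to leave $XY$. Summing only these tail terms gives
\[
\sup_{|u|\le1}\|V(u)-\tilde V(u)\|\le \frac{\cG''^{\,\ell}}{(\lfloor \ell/r\rfloor+1)!}.
\]
Here the exponential factors $e^{c'r}$ per extension are absorbed into $\cG''^{\ell}$.

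\textbf{Step 3: transfer to $E$ and $E^{-1}$.} Let $\tilde E$ be the expansional for the enlarged intervals. Apply Duhamel's formula to the two ODEs:
\[
E(s)-\tilde E(s)=-\int_0^1 \tilde E(s)\tilde E(ts)^{-1}\big(V(ts)-\tilde V(ts)\big)E(ts)\,s\,dt .
\]
Each factor is controlled by Step 1, and the difference is controlled by Step 2. This gives $\|E-\tilde E\|\le \cG^3\,\cG''^{\ell}/(\lfloor\ell/r\rfloor+1)!$. The same argument with the roles reversed handles $E^{-1}$. Finally, enlarge $\cG$ so that all prefactors are absorbed into a single $\cG^\ell$, which is possible since $\ell\ge1$.
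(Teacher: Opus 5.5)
Your proposal is correct and follows essentially the argument behind the citation. The paper gives no proof of its own; it imports the result from \cite[Corollary 3.4]{bluhm2022}, which rests on Araki's one-dimensional bounds and locality estimates for the imaginary-time evolution of the boundary term \cite{araki1969,perezgarcia2023}, then passes to the expansional through its linear ODE (Dyson-series) representation, just as your Steps 1--3 do. Two points need to be stated carefully. In Step 1, the resummed interior conjugations act over the successive time differences $t_j-t_{j+1}$, which add up to $|u|$, so their costs combine to $e^{2|u|J|S_m|}$ with $|S_m|=\cO(mr)$; bounding each one separately by $e^{2|u|\|H_{S_j}\|}$ would give $e^{\cO(m^2)}$ and destroy the $1/m!$. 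In Step 2, ``$V$ is unchanged'' requires $|X|,|Y|\ge r$; the case $\ell<r$ is trivial after enlarging $\cG$.
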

As another shorthand notation we define the Araki expansional restricted to the $k$-neighbourhood of $B$:
\[
				E^k_{X,Y}(s)=E_{X\cap\partial^k B,Y\cap\partial^k B}
\]

We will make use of results on the decay of correlations in Gibbs states.
In particular, we need a form of approximate factorization to prove separability of constant-sized regions.
The results in \cite{bluhm2022} show the equivalence of several notions of the decay of correlations including that of correlation functions as well as said approximate factorization.
While the final results in there are restricted to the translation-invariant case, as the result relies on \cite{araki1969}, this restriction can be readily lifted by applying the equivalence to the more recent result on decay of correlations in \cite{bergamaschi2025}.

\begin{lemma}\label{lem:doC}
        For any consecutive adjacent intervals $\tilde AABC\tilde C$, and potential $\Phi$ of bounded range $r$ and strength $J$, consider the Gibbs state $\rho=\rho^{\tilde AABC\tilde C}$. Then, there exist constants $C, \alpha>0$ only dependent on $r$, $J$, and $d$ such that
 \[
         \|\rho_{AC}-\rho_A\otimes\rho_C\|\le C \exp(-\alpha|B|)
 \]
\end{lemma}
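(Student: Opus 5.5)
The plan is to derive the statement from a uniform bound on mutual information, and to pass to the operator norm only at the very end. Since $\|X\|\le\|X\|_1$, it suffices to show
\[
\|\rho_{AC}-\rho_A\otimes\rho_C\|_1\le C\exp(-\alpha|B|).
\]
Pinsker's inequality gives
\[
\|\rho_{AC}-\rho_A\otimes\rho_C\|_1\le\sqrt{2D(\rho_{AC}\|\rho_A\otimes\rho_C)}=\sqrt{2I(A:C)_\rho}.
\]
So it is enough to show that $I(A:C)_\rho$ decays exponentially in $|B|$, uniformly in $|A|,|C|,|\tilde A|,|\tilde C|$. Taking the square root halves the rate, which only changes $\alpha$.

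For the mutual information I would use the equivalence of \cite{bluhm2022}. They show that, for 1D Gibbs states of finite-range interactions, exponential decay of correlation functions,
\[
|\Tr[\rho\, XY]-\Tr[\rho X]\Tr[\rho Y]|\le C'e^{-\alpha'|B|}\|X\|\|Y\|
\]
for $X\in\cA_A$ and $Y\in\cA_C$, implies exponential decay of $I(A:C)$ and approximate factorization. The route goes through a mixed-norm estimate: they write $\rho$ via the Araki expansionals $E_{X,Y}$ across a cut inside $B$. The locality estimate of Lemma~\ref{lem:arakiExp} then lets them replace $\rho_{AC}$, up to an error $\cG^{\ell}/(\lfloor\ell/r\rfloor+1)!$, by an expression in which correlations between $A$ and $C$ appear only through product observables. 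Decay of correlations controls those observables.

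The step I expect to be the main obstacle is removing translation invariance. I would proceed as follows.
\begin{itemize}
\item First, take the correlation-decay input from \cite{bergamaschi2025}. It proves exponential decay of correlation functions for arbitrary (non-translation-invariant) finite-range interactions of range $r$ and strength $J$, with constants uniform in the interval and its position.
\item Second, go through the proof of the equivalence in \cite{bluhm2022} and check that every ingredient is uniform in the position and size of the regions. This means the Araki bounds of Lemma~\ref{lem:arakiExp}, which depend only on $r,J$, and the partition-function estimates of Lemma~\ref{lem:partFuncRatio}, which enter when normalizing the marginals.
\item Third, make sure the outer buffers $\tilde A,\tilde C$ cause no trouble. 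They only enlarge the intervals $X,Y$ in the expansionals, and Lemma~\ref{lem:arakiExp} is stated uniformly in such extensions.
\end{itemize}
Combining these yields $I(A:C)\le C''e^{-\alpha''|B|}$ with constants depending only on $r,J,d$. Pinsker and $\|\cdot\|\le\|\cdot\|_1$ then give the lemma.
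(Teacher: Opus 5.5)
Your proposal is correct and follows the paper's proof essentially verbatim: the paper also takes the non-translation-invariant exponential decay of correlations from \cite[Corollary I.2]{bergamaschi2025} and feeds it into the equivalence with uniform exponential clustering in \cite[Theorem 8.2]{bluhm2022}, which gives exponential decay of $I(A:C)$, and then concludes with Pinsker's inequality and $\|\cdot\|\le\|\cdot\|_1$. Your explicit check that the ingredients of \cite{bluhm2022} are uniform in the position and size of the regions, including the buffers $\tilde A,\tilde C$, is what the paper compresses into the remark that the translation-invariance restriction ``can be readily lifted.''
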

\begin{proof}
        \cite[Corollary I.2]{bergamaschi2025} proves the exponential decay of correlations
        \begin{align*}
                \left|\corr(X^A,X^C)_{\rho_{AC}}\right|:&= \left|\Tr_{AC}\left[(X^A\otimes X^C)\rho_{AC}\right]-\Tr_A[X^A\rho_A]\Tr_C[X^C\rho_C]\right|\\
                &\le C'\exp(-\alpha'|B|)
        \end{align*}
        for constants $C',\alpha'>0$ only dependent on the interaction range, strength, and local dimension.
        This is equivalent to the uniform exponential clustering condition in \cite[Theorem 8.2]{bluhm2022}, which thereby yields the exponential decay of the mutual information
        \[
        I(A:C)\le C''\exp(-\alpha''|B|)\,.
        \]
        The Lemma follows from Pinsker's inequality $\|\rho_{AC}-\rho_A\otimes\rho_C\|_1^2\le2 I(A:C)$ and the matrix norm inequality $\|\cdot\|\le\|\cdot\|_1$.
\end{proof}

\begin{remark}\label{rem:doC}
        The equivalence in \cite{bluhm2022} is inevitably restricted to 1D systems making heavy use of the results in Lemma~\ref{lem:arakiExp}.
        However, it turns out that a weaker version of the equivalence between decay of correlation functions and mutual information with \emph{an exponential overhead in the sizes} of $A$ and $C$ holds more generally.
        It is a straightforward consequence of the equivalence between trace norm and  relative entropy distance together with the equivalences  of finite-dimensional norms up to dimensional prefactors.
       It turns out that this exponential overhead poses no barrier to our proof.
       The above lemma makes our proof cleaner and improves constants.
       However, for potential extensions to higher dimensions, it is interesting to note that the decay of correlations is sufficient to ensure this weaker form of approximate factorization.
\end{remark}

The following Lemma proves that the decay of smallest eigenvalues of marginals is no faster than exponential.
Due to a slightly different setup, we present the proof which is analogous to \cite[Lemma 4.2]{fawzi2023}.
\begin{lemma}\label{lem:margLowerBound}
        For an interaction $\Phi$ of range $r$ and strength $J$, there are constants $C$, $\alpha>0$, only dependent on $r$, $J$, and $d$, such that for any tripartite interval $ABC$, and the Gibbs state $\rho^{ABC}$ its marginal is lower bounded as
        \[
               \|\rho_B^{-1}\|\le C\exp(\alpha|B|)
        \]
        uniformly in the size of $A$ and $C$.
\end{lemma}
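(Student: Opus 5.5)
We need a lower bound $\rho_B\succeq C^{-1}e^{-\alpha|B|}\id$, uniformly in $A$ and $C$. The plan is to compare $e^{-H_{ABC}}$ with the decoupled operator $e^{-H_A-H_B-H_C}$ through the Araki expansionals of Lemma~\ref{lem:arakiExp}, and then to bound the ratios of partition functions with Lemma~\ref{lem:partFuncRatio}.

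\textbf{Step 1: factorize the Gibbs operator.} We apply Lemma~\ref{lem:arakiExp} twice, once with $X=AB$, $Y=C$ and once with $X=A$, $Y=B$, at $s=1$. With $E_{AB,C}:=E_{AB,C}(1)$ and $E_{A,B}:=E_{A,B}(1)$ this gives
\[
e^{-H_{ABC}}=E_{AB,C}\,e^{-H_{AB}-H_C}=E_{AB,C}\,E_{A,B}\,e^{-H_A}e^{-H_B}e^{-H_C}=:G\,e^{-H_A-H_B-H_C}.
\]
Each factor and its inverse is bounded by $\cG$, so $\|G\|,\|G^{-1}\|\le\cG^2$.

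\textbf{Step 2: pass to a symmetric operator inequality.} The factorized expression is not Hermitian, so we use the square-root version instead. Write $e^{-H_{ABC}}=|e^{-H_{ABC}/2}|^2$ and apply the same factorization at $s=1/2$:
\[
e^{-H_{ABC}/2}=G_{1/2}\,e^{-(H_A+H_B+H_C)/2},\qquad \|G_{1/2}^{-1}\|\le\cG^2 .
\]
It follows that
\[
e^{-H_{ABC}}=e^{-H_{ABC}/2}\,\bigl(e^{-H_{ABC}/2}\bigr)^\dagger=G_{1/2}\,e^{-(H_A+H_B+H_C)}\,G_{1/2}^\dagger\succeq \cG^{-4}\,e^{-H_A-H_B-H_C}.
\]
The last step uses $Y^\dagger Y\succeq \|Y^{-1}\|^{-2}\,\id$: we write $G\,P\,G^\dagger=(P^{1/2}G^\dagger)^\dagger(P^{1/2}G^\dagger)$ and compare it with $P$ via
\[
G\,P\,G^\dagger\succeq \sigma_{\min}(G)^2\,P .
\]
This inequality is not valid for general noncommuting $P$. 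I expect this to be the main obstacle.

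\textbf{Fallback if Step 2 fails.} The sturdier route is to take the partial trace first. Since the partial trace is positive, we use the following.

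\begin{itemize}
\item The map $X\mapsto \tr_{AC}[E\,X\,E^\dagger]$ is well behaved with respect to a lower bound on $X$ only if $E$ is close to local.
\item Instead, apply Lemma~\ref{lem:arakiExp} in its locality form, which says that $E_{A,B}$ is approximated by an expansional supported near the boundary of $A$ and $B$.
\item Simplest of all, avoid expansionals and bound directly. We have $H_{ABC}=H_A+H_B+H_C+W$ with $\|W\|\le 2rJ$. By the Golden–Thompson-free inequality $e^{-H-W}\succeq$ ... this also fails for operators.
\end{itemize}

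\textbf{Cleanest approach: an entanglement-free variational bound.} We bound $\bra\psi\rho_B\ket\psi$ from below for every unit vector $\psi\in\cH_B$. Take any orthonormal basis $\{\phi_i\}$ of $AC$. Then
\[
\bra\psi\rho_B\ket\psi=\sum_i\bra{\phi_i\psi}\rho\ket{\phi_i\psi}\ge \Tr[\rho\,(\id_{AC}\otimes\ketbra\psi)].
\]
By the Peierls–Bogoliubov / Jensen inequality for the Gibbs variational principle, for any state $\sigma$ we have
\[
\Tr[\rho\,P]\ \ge\ \exp\!\bigl(-D(\sigma\|\rho)\bigr)\quad\text{when } \sigma \text{ is supported in } P .
\]
Choose $\sigma=\rho_A^{A}\otimes\ketbra\psi\otimes\rho_C^{C}$, where $\rho_A^{A}$ and $\rho_C^{C}$ are the Gibbs states on $A$ and $C$ alone. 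Then
\[
D(\sigma\|\rho)=\Tr[\sigma(\log\sigma+H_{ABC})]+\log Z_{ABC}.
\]
The $A$ and $C$ parts cancel against $\log Z_A+\log Z_C$, up to the boundary terms $\|W\|\le 2rJ$. The remaining contributions are $\bra\psi H_B\ket\psi\le J|B|$ and, by Eq.~\eqref{eq:Z3}, $\log Z_{ABC}-\log Z_A-\log Z_C\le \log Z_B+2rJ\le(\log d+J)|B|+2rJ$. Altogether
\[
D(\sigma\|\rho)\le(\log d+2J)|B|+4rJ,
\]
hence
\[
\bra\psi\rho_B\ket\psi\ge e^{-4rJ}e^{-(\log d+2J)|B|}.
\]
The key step, and the one I would verify first, is the inequality $\Tr[\rho P]\ge e^{-D(\sigma\|\rho)}$ for $\sigma\le P$. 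It follows from the data processing inequality applied to the measurement $\{P,\id-P\}$, which gives $D(\sigma\|\rho)\ge -\log\Tr[\rho P]$. This yields the lemma with $C=e^{4rJ}$ and $\alpha=\log d+2J$.
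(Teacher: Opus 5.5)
Your final argument (the data-processing bound with the product trial state $\sigma=\rho^A\otimes\ketbra{\psi}{\psi}\otimes\rho^C$) is correct and self-contained. It is a genuinely different route from the paper's, and Steps~1--2 and the fallback list can simply be dropped from a write-up.

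The paper actually takes the route you abandoned in Step~2, and your obstacle there is only an ordering issue. Since $e^{-H_{ABC}/2}$ is Hermitian, it equals both $G_{1/2}P^{1/2}$ and $P^{1/2}G_{1/2}^\dagger$, where $P=e^{-(H_A+H_B+H_C)}$. Hence $e^{-H_{ABC}}=P^{1/2}G_{1/2}^\dagger G_{1/2}P^{1/2}\succeq \cG^{-4}P$. This is exactly the inequality $\cG^4e^{-H_{ABC}}\succeq e^{-(H_A+H_B+H_C)}$ that the paper obtains from Lemma~\ref{lem:arakiExp} at $s=-1/2$. Applying the positive map $\tr_{AC}$ gives $\cG^4Z_{ABC}\rho_B\succeq Z_AZ_Ce^{-H_B}$, and Lemma~\ref{lem:partFuncRatio} finishes with $C=\cG^4e^{2rJ}$ and $\alpha=\log d+2J$.

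Your route has two advantages:
\begin{itemize}
\item It reaches the same rate $\alpha=\log d+2J$ with the better prefactor $e^{4rJ}$, and it avoids the constant $\cG$ from Araki's complex-time estimates altogether.
\item It uses nothing one-dimensional. On an arbitrary lattice, take $\sigma=\rho^{B^c}\otimes\ketbra{\psi}{\psi}$. The same computation together with \eqref{eq:Z1} gives $\lambda_{\min}(\rho_B)\ge e^{-(\log d+2J)|B|-2\|W\|}$, where $W$ collects the terms meeting both $B$ and $B^c$, so $\|W\|\le J|B|$. This is precisely the uniform faithfulness that the remark after Proposition~\ref{prop:constACsep} singles out as the ingredient needed for higher-dimensional extensions.
\end{itemize}

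In exchange, the paper's route yields a genuine operator inequality, $\rho_B\succeq\cG^{-4}e^{-2rJ}\,e^{-H_B}/Z_B$. That is, $\rho_B$ dominates the local Gibbs state with a constant independent of $|B|$. Your argument only gives $\langle\psi|\rho_B|\psi\rangle\ge e^{-4rJ}e^{-\langle\psi|H_B|\psi\rangle}/Z_B$. By Jensen's inequality this is weaker than the paper's operator inequality, though it is more than enough for the lemma as stated.
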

\begin{proof}
        Consider
\[
        E_{AB,C}(-1/2)E_{A,B}(-1/2)E_{A,B}^\dagger(-1/2)E_{AB,C}^\dagger(-1/2)=e^{H_{ABC}/2}e^{-(H_A+H_B+H_C)}e^{H_{ABC}/2}\,.
\]
By Lemma~\ref{lem:arakiExp}, we have
\[
        \left\|E_{AB,C}(-1/2)E_{A,B}(-1/2)E_{A,B}^\dagger(-1/2)E_{AB,C}^\dagger(-1/2)\right\|\le\cG^4
\]
or equivalently
\[
        \cG^{4}\id\succeq e^{H_{ABC}/2}e^{-(H_A+H_B+H_C)}e^{H_{ABC}/2}\,.
\]
We rearrange
\[
        \cG^{4}e^{-H_{ABC}}\succeq e^{-(H_A+H_B+H_C)}\,,
\]
and using that partial traces preserve the positive semidefinite order 
\[
        Z_{ABC}\cG^{4}\rho_B\succeq Z_AZ_C \exp(-H_B)\succeq Z_AZ_C \exp(-\|H_B\|)\id\,.
\]
Since, by Lemma~\ref{lem:partFuncRatio}
\[
        \frac{Z_AZ_C}{Z_{ABC}}=\frac{Z_{AB}Z_C}{Z_{ABC}}\frac{Z_AZ_B}{Z_{AB}}\frac1{Z_B}\ge \exp(-(J+\log(d))|B|-2rJ)\,,
\]
and $\|H_B\|\le J|B|$, the proof follows with $C=\cG^{4}e^{2rJ}$ and $\alpha=2J+\log(d)$.
\end{proof}
\begin{remark}
        A simple example of a non-interacting, translation-invariant on-site potential shows that the above Lemma is optimal with $\alpha\ge\log(d)+J$.
\end{remark}

\section{Proof of Theorem~\ref{thm:mainTec}}\label{sec:proof}
In this section, we present the proof of the main theorem.
We will make use of the following technical lemma.
\begin{lemma}\label{lem:pTrContract}
        For any regions $A$, $B$, and a normalized state $\rho_B\in\cA_B$, the map $\cA_{AB}\to\cA_A$, $X\mapsto\tr_B[\rho_B X]$ is contractive in the operator norm.
\end{lemma}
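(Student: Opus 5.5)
The plan is to show that the map $\Psi(X)=\tr_B[\rho_B X]$ is unital and completely positive. A unital positive map is automatically contractive in the operator norm, so this will prove the claim. Here $\rho_B X$ means $(\id_A\otimes\rho_B)X$. Because the partial trace is cyclic in the $B$ factor, we can rewrite
\[
\tr_B[\rho_B X]=\tr_B\big[(\id_A\otimes\rho_B^{1/2})\,X\,(\id_A\otimes\rho_B^{1/2})\big].
\]
This symmetrized form is the one we will work with.

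First, positivity. The map $X\mapsto(\id_A\otimes\rho_B^{1/2})X(\id_A\otimes\rho_B^{1/2})$ is a congruence, so it preserves the positive semidefinite order. The partial trace $\tr_B$ is also positive. Hence $\Psi$ is positive, and in fact completely positive, since both steps are.

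Second, unitality. We have $\Psi(\id_{AB})=\tr_B[\id_A\otimes\rho_B]=\Tr[\rho_B]\,\id_A=\id_A$, using the normalization of $\rho_B$.

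Third, contractivity. For self-adjoint $X$, the bounds $-\|X\|\id\preceq X\preceq\|X\|\id$ give $-\|X\|\id_A\preceq\Psi(X)\preceq\|X\|\id_A$ by positivity and unitality, so $\|\Psi(X)\|\le\|X\|$.

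For general $X$, the self-adjoint case alone would lose a factor of 2, so we use complete positivity instead. The block matrix
\[
\begin{pmatrix}\|X\|\id & X\\ X^\dagger & \|X\|\id\end{pmatrix}
\]
is positive semidefinite. Applying $\mathrm{id}_2\otimes\Psi$ gives the positive block matrix
\[
\begin{pmatrix}\|X\|\id & \Psi(X)\\ \Psi(X)^\dagger & \|X\|\id\end{pmatrix},
\]
which forces $\|\Psi(X)\|\le\|X\|$. This is the Russo--Dye type argument for unital completely positive maps.

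A more direct alternative is a Cauchy--Schwarz argument via Kraus operators. Write $\Psi(X)=\sum_i K_i^\dagger X K_i$ with $K_i=\id_A\otimes\rho_B^{1/2}\ket{i}_B$ and $\sum_i K_i^\dagger K_i=\id_A$. For unit vectors $u,v$,
\[
|\langle u,\Psi(X)v\rangle|\le\|X\|\Big(\sum_i\|K_iu\|^2\Big)^{1/2}\Big(\sum_i\|K_iv\|^2\Big)^{1/2}=\|X\|.
\]

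The only step needing care is passing from self-adjoint to arbitrary operators. The Kraus-operator Cauchy--Schwarz bound handles it cleanly, and that is the version I would write out.
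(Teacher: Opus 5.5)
Your proof is correct, but it takes a different route from the paper. The paper diagonalizes $\rho_B=\sum_i p(i)\ketbra{i}{i}_B$, so that $\tr_B[\rho_B X]=\sum_i p(i)\,\bra{i}_B X\ket{i}_B$ is a convex combination of compressions, each of norm at most $\|X\|$, and concludes in one line.

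You instead identify the map as unital and completely positive, and prove contractivity for general unital CP maps. You do this either by the $2\times 2$ block-matrix (2-positivity) argument or by Cauchy--Schwarz over an arbitrary Kraus decomposition $K_i=\id_A\otimes\rho_B^{1/2}\ket{i}_B$. If you take $\ket{i}$ to be the eigenbasis of $\rho_B$, your Kraus operators become $\sqrt{p(i)}\,\id_A\otimes\ket{i}_B$, and the Cauchy--Schwarz estimate reduces to the paper's convexity bound. The paper's version is the most elementary: no complete positivity is needed. Yours needs no spectral decomposition and applies verbatim to any unital CP map.

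Your care with non-self-adjoint $X$ is also well placed. The paper writes the norm as a supremum of diagonal matrix elements $\bra{\psi}\cdot\ket{\psi}$, which is the numerical radius and equals the operator norm only for normal operators. This does no harm in the paper's application: the operator fed into the lemma when bounding $\|\Delta_k\|$ is a difference of two operators of the form $F^\dagger F$, hence self-adjoint. For general $X$ the gap is closed by the triangle inequality on $\sum_i p(i)\bra{i}_B X\ket{i}_B$, but your argument already covers that case explicitly.
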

\begin{proof}
        Let $\rho_B=\sum_i p(i) \ketbra{i}{i}_B$ where $\ket i_B$ is an orthonormal basis diagonalizing $\rho_B$. Then, expanding the partial trace in this basis, we have
        \begin{align*}
                \left\|\tr_B[\rho_B X]\right\|&=\sup_{\ket{\psi}_A: \braket\psi_A=1} \sum_ip(i)\bra{\psi}_A\bra{i}_B X \ket{\psi}_A\ket i_B\\
                &\le\sum_i p(i) \|X\|\\
                &=\|X\|\,.
        \end{align*}
\end{proof}

Next, we prove a separable structure of the marginals $AC$, where the entanglement lengthscale $|B|$ still depends on the size of $A$ and $C$.
We will bootstrap the general, system-size independent entanglement lengthscale from this proposition.
\begin{proposition}\label{prop:constACsep}
        Let $\Phi$ be an interaction of range $r$ and strength $J$, and fix the local dimension $d$. There exists a function $\ell_1(k)$ (also dependent on $r$, $J$, and $d$) such that for $|B|\ge\ell_1(k)$, there exists a decomposition
\[
        e^{H^k_{AC}/2}\rho_{AC}^{k}e^{H^k_{AC}/2}=\gamma(k)\id+\Gamma(k)\,,
\]
where $\gamma(k)=\exp(-\mathcal O(k))\id$ and $\Gamma(k)$ is separable.
Furthermore, we can choose $\ell_1(k)=\cO(k)$.
\end{proposition}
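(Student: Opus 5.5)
Let $A_k=A\cap\partial^kB$ and $C_k=C\cap\partial^kB$, and let $\rho^k$ be the Gibbs state on $\partial^kB=A_kBC_k$ (so $|A_k|,|C_k|\le k$). The plan is to combine the approximate factorization of Lemma~\ref{lem:doC} with the lower bounds on marginals from Lemma~\ref{lem:margLowerBound}, and then absorb the error into a maximally mixed component via Lemma~\ref{lem:sepPerturb}.

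\emph{Decomposition.} Write
\[
\rho^k_{AC}=\rho^k_{A}\otimes\rho^k_C+\Delta,\qquad \|\Delta\|\le C e^{-\alpha|B|},
\]
using Lemma~\ref{lem:doC} applied to $\rho^k$ with $\tilde A=\tilde C=\emptyset$. By Lemma~\ref{lem:margLowerBound}, applied to the marginals on $A_k$ and $C_k$ (these are end intervals, a degenerate case of the tripartite setting), we have $\rho^k_A\succeq a\id$ and $\rho^k_C\succeq a\id$ with $a=C^{-1}e^{-\alpha k}$. Hence
\[
\rho^k_A\otimes\rho^k_C=(\rho^k_A-\tfrac a2\id)\otimes(\rho^k_C-\tfrac a2\id)+\tfrac a2\id\otimes\rho^k_C+\tfrac a2\rho^k_A\otimes\id-\tfrac{a^2}4\id .
\]
The first three terms are separable. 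Moreover, $\tfrac a2\id\otimes\rho^k_C+\tfrac a2\rho^k_A\otimes\id\succeq a^2\id$, so it is cleaner to take
\[
\rho^k_A\otimes\rho^k_C=(\rho^k_A-\tfrac a2\id)\otimes(\rho^k_C-\tfrac a2\id)+\tfrac a2\bigl[(\rho^k_A-\tfrac a2\id)\otimes\id+\id\otimes(\rho^k_C-\tfrac a2\id)\bigr]+\tfrac{3a^2}{4}\id .
\]
Here every bracketed term is a tensor product of positive operators, hence separable. This gives $\rho^k_{AC}=\Sigma+\tfrac{3a^2}4\id+\Delta$ with $\Sigma$ separable.

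\emph{Absorbing $\Delta$.} Split $\tfrac{3a^2}{4}\id=\tfrac{3a^2}{8}\id+\tfrac{3a^2}{8}\id$. The operator $\tfrac{3a^2}{8}(\id+\tfrac{8}{3a^2}\Delta)$ is separable by Lemma~\ref{lem:sepPerturb} provided
\[
\tfrac{8}{3a^2}\,C e^{-\alpha|B|}\le d^{-|A_k\cup C_k|/2}\le d^{-k}.
\]
Since $a^{-2}=C^2e^{2\alpha k}$, this holds once $|B|\ge \ell_1(k)$ for some $\ell_1(k)=\cO(k)$ with explicit constants depending on $r$, $J$ and $d$. Thus $\rho^k_{AC}=\Gamma'+\gamma'\id$ with $\Gamma'$ separable and $\gamma'=\tfrac{3a^2}8=e^{-\cO(k)}$.

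\emph{Conjugation.} Finally, conjugate by $e^{H^k_{AC}/2}=e^{H_{A_k}/2}\otimes e^{H_{C_k}/2}$. This is a product operator, since for $|B|\ge r$ no interaction term touches both $A_k$ and $C_k$. By Lemma~\ref{lem:sepStruct}, $\Gamma'$ stays separable, while
\[
\gamma' e^{H^k_{AC}}=\gamma' e^{-\|H^k_{AC}\|}\id+\gamma'\bigl(e^{H_{A_k}}\otimes e^{H_{C_k}}-e^{-\|H^k_{AC}\|}\id\bigr).
\]
The second summand is separable: writing $e^{H_{A_k}}\otimes e^{H_{C_k}}-m_Am_C\id=(e^{H_{A_k}}-m_A)\otimes e^{H_{C_k}}+m_A\id\otimes(e^{H_{C_k}}-m_C)$ with $m_X$ the smallest eigenvalue expresses it as a sum of products of positive operators. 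Setting $\gamma(k)=\gamma' e^{-2Jk}=e^{-\cO(k)}$ then gives the claimed decomposition.

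The main obstacle is bookkeeping rather than substance. One must check that Lemmas~\ref{lem:doC} and~\ref{lem:margLowerBound} apply to $\rho^k$ with constants uniform in the region sizes, and that the dimension factor $d^{k}$ from Lemma~\ref{lem:sepPerturb}, together with the $e^{2\alpha k}$ coming from $a^{-2}$, still gives only a linear requirement $|B|=\cO(k)$. This works because the only constraint is the exponential comparison $e^{-\alpha|B|}\lesssim e^{-\cO(k)}$.
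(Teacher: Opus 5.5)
Your proposal is correct and follows essentially the paper's route: approximate factorization from Lemma~\ref{lem:doC}, marginal lower bounds from Lemma~\ref{lem:margLowerBound}, a shifted-product decomposition leaving a multiple of $\id$, and Lemma~\ref{lem:sepPerturb} to absorb $\Delta$; the only difference is that you conjugate by $e^{H^k_{AC}/2}$ at the end, whereas the paper conjugates first and pays the $e^{2Jk}$ factors in its bounds on $\Delta$ and $\tilde\rho^k_A$. Two harmless slips: with $\rho^k_A-\tfrac a2\id$ and $\rho^k_C-\tfrac a2\id$ inside your bracket the leftover identity term is $\tfrac{a^2}{4}\id$, not $\tfrac{3a^2}{4}\id$ (use $\rho^k_A-a\id$ and $\rho^k_C-a\id$ there to get $\tfrac{3a^2}{4}$), and the sufficient condition should read $\tfrac{8}{3a^2}Ce^{-\alpha|B|}\le d^{-k}\le d^{-|A_k\cup C_k|/2}$.
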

\begin{proof}
        We assume $|B|\ge r$, such that $H^k_{AC}=H_A^k+H_C^k$.
        Define $\Delta$ by 
        \[
                \tilde\rho_{AC}^k:=e^{H^k_{AC}/2}\rho_{AC}^ke^{H^k_{AC}/2}=\tilde\rho_A^k\otimes \tilde\rho_C^k +\Delta\,,
        \]
where $\tilde\rho_A^k:=e^{H^k_A/2}\rho_A^ke^{H^k_A/2}$ and $\tilde\rho^k_C:=e^{H_C^k/2}\rho_C^ke^{H_C^k/2}$.
By Lemma~\ref{lem:doC}, there exist constants $C, \alpha>0$ such that 
\begin{align*}
        \|\Delta\|&=\|e^{H_{AC}^k/2}\rho^k_{AC}e^{H^k_{AC}/2}-\tilde\rho^k_A\otimes\tilde\rho^k_C\|\\
                  &\le \left\|e^{H^k_{AC}}\right\| \left\|\rho^k_{AC}-\rho^k_A\otimes\rho_C^k\right\|\\
                &\le C \exp(2Jk-\alpha|B|)\,,
        \end{align*}
where we use $\left\|\exp(H^k_{AC})\right\|\le\exp(2Jk)$.

Furthermore, using Lemma~\ref{lem:margLowerBound}, we have that
\[
        \tilde\rho^k_A=e^{H_A^k/2}\rho_A^ke^{H_A^k/2}\succeq e^{H_A^k}\left\|(\rho^k_A)^{-1}\right\|^{-1}\succeq \left\|e^{-H_{A}^k}\right\|^{-1}\left\|(\rho^k_A)^{-1}\right\|^{-1}\succeq C'\exp(-\alpha'k)\id\,,
\]
where we combined the contribution from the bound $\|\exp(-H_A^k)\|\le\exp(Jk)$ and the constant from the Lemma into a new constant $\alpha'$. 
The same bound holds for $\tilde\rho^k_C$.

We choose $\gamma(k)=C'^2\exp(-2\alpha' k)/2$ and decompose
\begin{align*}
        \tilde\rho_{AC}^k&=\tilde\rho_A^k\otimes\tilde\rho_C^k+\Delta\\
                   &=\left(\tilde\rho_A^k-\sqrt{2\gamma(k)}\id_A\right)\otimes\left(\tilde\rho_C^k-\sqrt{2\gamma(k)}\id_C\right)+\sqrt{2\gamma(k)}\id_A\otimes\left(\tilde\rho_C^k-\sqrt{2\gamma(k)}\id_C\right)\\
                   &\phantom{=}+\left(\tilde\rho_A^k-\sqrt{2\gamma(k)}\id_A\right)\otimes\sqrt{2\gamma(k)}\id_C+2\gamma(k)\id_{AC}+\Delta\,.
\end{align*}
Consequently,
\begin{align*}
        \Gamma(k):=&\left(\tilde\rho_A^k-\sqrt{2\gamma(k)}\id_A\right)\otimes\left(\tilde\rho_C^k-\sqrt{2\gamma(k)}\id_C\right)+\sqrt{2\gamma(k)}\id_A\otimes\left(\tilde\rho_C^k-\sqrt{2\gamma(k)}\id_C\right)\\
                   &\phantom{=}+\left(\tilde\rho_A^k-\sqrt{2\gamma(k)}\right)\otimes\sqrt{2\gamma(k)}+\gamma(k)\id_{AC}+\Delta
\end{align*}
and due to our choice of $\gamma$, the first three terms are separable as
\begin{align*}
        \tilde\rho^k_A-\sqrt{2\gamma(k)}\id_A,\,\tilde\rho^k_C-\sqrt{2\gamma(k)}\succeq 0
\end{align*}
are positive operators.
Finally, by Lemma~\ref{lem:sepPerturb} the remaining term
\[
        \gamma(k)\id_{AC}+\Delta
\]
is separable if 
\[
\frac{\|\Delta\|}{\gamma(k)}=\frac{C\exp(2Jk-\alpha|B|)}{C'^2\exp(-2\alpha'k)/2}\le d^{-k}\,,
\]
which is achieved by the choice $|B|\ge\ell_1=\max\{r,(\log(2C/C'^2)+k(2\alpha'+\log(d)+2J))/\alpha\}=\cO(k)$.
\end{proof}
\begin{remark}
        In fact, the above theorem already proves separability of \emph{constant-sized} regions (the factors $e^{H^k_{AC}/2}$ in the statement are superfluous by Lemma~\ref{lem:sepStruct} and included only for convenience later).
        Furthermore, it only relies on two ingredients:
        Uniform faithfulness of the state, i.e., a uniform lower bound on all constant-sized marginals, and decay of correlations.
        For the latter, even a weaker notion than the approximate factorization suffices since for constant dimension they can be proven equivalent, see Remark~\ref{rem:doC}.
        This paves the way for higher-dimensional extensions of the argument, though it would only be interesting for decay of correlations beyond the high-temperature regime as otherwise, the result of \cite{bakshi2024} already applies.
\end{remark}

Let us now proceed to the proof of our main theorem.

\begin{theorem}\label{thm:mainTec}
		Let $\Phi$ be a potential on a one-dimensional spin-chain of range $r$ and interaction strength $J$. There is a function $\ell(d,J,r)$, such that for any interval $A\cup B\cup C$ with disjoint contiguous subintervals $A$, $B$, and $C$ in order, where $|B|\ge \ell(d,J,r)$ and the Gibbs state $\rho=\rho^{ABC}=\exp(-H_{ABC})/Z_{ABC}$, its marginal $\rho_{AC}$ is separable between $A$ and $C$. In particular, $\ell(d, J, r)$ does not depend on $|A|$ or $|C|$.
\end{theorem}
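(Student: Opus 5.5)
We follow the strategy from the proof outline: write $\rho_{AC}$ as a telescoping sum over growing neighbourhoods $\partial^k B$, and absorb every correction into the identity component supplied by Proposition~\ref{prop:constACsep}.

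\textbf{Step 1: telescoping.} Fix $k_0$ and assume $|B|\ge\ell_1(k_0)$. Let $K$ be large enough that $\partial^K B\supseteq ABC$, so $\rho^K=\rho$. Conjugating by $e^{H^k_{AC}/2}$ mixes different $k$, so we normalize instead through the Araki expansionals. The Gibbs weight on $\partial^k B$ factorizes as
\[
e^{-H^k}=\bigl(E^k_{A,BC}\bigr)^\dagger(\ldots)\,e^{-H^k_A/2}\,e^{-H^k_{BC}}\,e^{-H^k_A/2}(\ldots),
\]
with the analogous splitting between $AB$ and $C$. Tracing out $B$ gives
\[
\rho^k_{AC}=\frac{Z_{\partial^kB\cap A}\,Z_{\partial^kB\cap C}}{Z_k}\;e^{-H^k_{AC}/2}\,M_k\,e^{-H^k_{AC}/2},
\qquad
M_k=\tr_B\!\bigl[\,\mathcal{E}_k^\dagger\,\sigma_B^k\,\mathcal{E}_k\bigr].
\]
Here $\mathcal{E}_k$ is a product of the expansionals $E^k_{A,B C}(\tfrac12)$ and $E^k_{AB,C}(\tfrac12)$ (or their inverses), and $\sigma^k_B$ is a Gibbs-type normalized state on $B$ built from $H_B$. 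By Lemma~\ref{lem:arakiExp}, $\mathcal{E}_{k+1}-\mathcal{E}_k$ is superexponentially small, of order $\cG^{\ell}/(\lfloor\ell/r\rfloor+1)!$ with $\ell\ge k$. Its support is $\partial^{k+1}B$, of size at most $|B|+2k+2$. Lemma~\ref{lem:pTrContract} keeps the $\tr_B$ against $\sigma_B^k$ contractive, so $\Delta_k:=M_{k+1}-M_k$ is supported on $(\partial^{k+1}B)\cap(A\cup C)$, a set of at most $2k+2$ sites, with $\|\Delta_k\|\le c\,\cG^{2k}/(\lfloor k/r\rfloor)!$. The scalar prefactor ratios are uniformly bounded by Lemma~\ref{lem:partFuncRatio}. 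For the final answer only positivity of the prefactor matters: by Lemma~\ref{lem:sepStruct}, conjugation by the product operator $e^{-H_{AC}/2}$ preserves separability.

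\textbf{Step 2: the base term.} Applied to $M_{k_0}$ (the same operator up to the positive product conjugations above), Proposition~\ref{prop:constACsep} gives $M_{k_0}=\gamma(k_0)\id+\Gamma(k_0)$ with $\Gamma(k_0)$ separable and $\gamma(k_0)=e^{-c'k_0}$. A concern is that $\sigma_B$ is not exactly $\rho_B$. We will phrase the Proposition's input in terms of $M_{k_0}$, since its proof only uses decay of correlations and the marginal lower bound, and both hold for this operator.

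\textbf{Step 3: separability of the tail, the main obstacle.} We need
\[
\gamma(k_0)\id+\sum_{k\ge k_0}\Delta_k
\]
to be separable. Split $\gamma(k_0)\id=\sum_{k\ge k_0}\gamma_k\id$ with $\gamma_k=\gamma(k_0)2^{-(k-k_0+1)}$. By Lemma~\ref{lem:sepPerturb}, applied on the support of $\Delta_k$ and tensored with the identity elsewhere, each $\gamma_k\id+\Delta_k$ is separable provided
\[
\|\Delta_k\|\le\gamma_k\,d^{-(k+1)} .
\]
The left side decays like $1/(k/r)!$, which beats $e^{-c'k_0}2^{-k}d^{-k}$ for all $k\ge k_0$ once $k_0$ is a large enough constant, because a factorial eventually dominates any exponential. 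The subtle point, and the main obstacle, is that $\gamma(k_0)$ itself shrinks exponentially in $k_0$. We must check that the superexponential bound at $k=k_0$ already beats $e^{-c'k_0}$; it does, since $\cG^{k_0}/(k_0/r)!\ll e^{-c''k_0}$ for large $k_0$. We must also check that the expansionals are compared with overlap length $\ell\gtrsim k$ on both sides of each cut, which needs $|B|\gtrsim k_0$.

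\textbf{Step 4: assembly.} Fix such a $k_0$ and set $\ell=\ell_1(k_0)$, a constant. Summing the separable pieces and undoing the positive product conjugations via Lemma~\ref{lem:sepStruct} shows that $\rho_{AC}=\rho^K_{AC}$ is separable, uniformly in $|A|$ and $|C|$.
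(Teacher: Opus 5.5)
Your overall architecture is the paper's: telescope over the neighbourhoods $\partial^k B$ using Araki expansionals, take the base term from Proposition~\ref{prop:constACsep}, then split its identity weight geometrically and absorb each $\Delta_k$ via Lemma~\ref{lem:sepPerturb}. Steps~3--4 match the paper. The gap is in Steps~1--2: the identity you write down is not correct, and the ``concern'' it creates is waved away rather than resolved.

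With $\sigma_B$ sandwiched as $\tr_B[\mathcal{E}_k^\dagger\sigma_B\mathcal{E}_k]$, and with the expansionals outside the Gibbs factors as in your factorization of $e^{-H^k}$, you cannot pull $e^{-H^k_{AC}/2}$ out of the partial trace, because the expansionals act on $A$ and $C$. Your prefactor is also wrong. For an interaction with no terms across the two cuts, $M_k=\id$ and the correct prefactor is $Z_B/Z_k=1/(Z_{\partial^kB\cap A}Z_{\partial^kB\cap C})$, not $Z_{\partial^kB\cap A}Z_{\partial^kB\cap C}/Z_k\approx 1/Z_B$.

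The missing step is the exact identity used in the paper. Take $\mathcal{E}_k=E^k_{AB,C}(1/2)E^k_{A,B}(1/2)=e^{-H^k/2}e^{H^k_{AC}/2}e^{H_B/2}$, and place the decoupled Gibbs state $\rho^B=e^{-H_B}/Z_B$ \emph{outside} the product. Cyclicity of $\tr_B$ for $B$-supported operators then gives
\[
\tr_B\bigl[\rho^B\mathcal{E}_k^\dagger\mathcal{E}_k\bigr]=\frac{Z_k}{Z_B}\,e^{H^k_{AC}/2}\rho^k_{AC}\,e^{H^k_{AC}/2}\,.
\]
Here $\rho^B$ is not the marginal $\rho_B$, and it does not depend on $k$. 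So there is nothing to re-prove: Proposition~\ref{prop:constACsep} applies verbatim to the $k_0$ term. Your workaround, asserting that decay of correlations and marginal lower bounds ``hold for this operator'', is not justified for an operator that is not a conjugated Gibbs marginal. Likewise, the conjugations for different $k$ never need to be reconciled. One telescopes exactly these rescaled, conjugated marginals and undoes a conjugation only once, at the top level where $\partial^kB\supseteq ABC$, via Lemma~\ref{lem:sepStruct}.

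Second, it is not true that only positivity of the prefactor matters. The identity weight that Step~3 splits is $\frac{Z_{k_0}}{Z_B}\gamma(k_0)$. It must be bounded below by $e^{-O(k_0)}$ uniformly in $|B|$, and the paper gets this from Lemma~\ref{lem:partFuncRatio}, which gives $Z_{k_0}/Z_B\ge e^{-O(k_0)}$. Your Step~2 silently drops this factor. With your stated normalization the identity weight would be of order $Z_B\gamma(k_0)$, which is incompatible with the bound $\|M_{k_0}\|\le\cG^4$ forced by boundedness of the expansionals. With these two repairs your argument becomes the paper's proof.
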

\begin{remark}
        While stated for adjacent intervals $ABC$, which cover the entire system, it follows immediately that separability holds for any subregions at distance $\ell+1$ as the set of separable states is invariant under LOCC protocols including partial traces.
\end{remark}
\begin{proof}
        We assume again that $|B|\ge r$ such that $H_{AC}=H_A+H_C$.
        By Lemma~\ref{lem:sepStruct}, to show separability of $\rho_{AC}$, we can equivalently show that the following operator is separable.
\begin{align}
        \frac{Z_{ABC}}{Z_B}&\exp(H_{AC}/2)\rho_{AC}\exp(H_{AC}/2)=\frac1{Z_B}e^{H_{AC}/2}\tr_B\left[e^{-H_{ABC}}\right]e^{H_{AC}/2}\label{eq:sandwMarg}\\
											  &=\tr_B\Big[\rho^B e^{(H_A+H_B)/2}e^{-H_{AB}/2}e^{H_{AB}/2}e^{H_C/2}e^{-H_{ABC}}\\
                        &\phantom{=}\,\qquad\times e^{H_C/2}e^{H_{AB}/2}e^{-H_{AB}/2}e^{(H_A+H_B)/2}\Big]\\
                        &=\tr_B\Big[\rho^B E_{A,B}^\dagger(1/2)E_{AB,C}^\dagger(1/2)E_{AB,C}(1/2)E_{A,B}(1/2)\Big]\\
                        &=\tr_B\Big[\rho^B E^{k_0\dagger}_{A,B}(1/2)E^{k_0\dagger}_{AB,C}(1/2)E^{k_0}_{AB,C}(1/2)E^{k_0}_{A,B}(1/2)\Big]\label{eq:k0region}\\
												&\phantom{==}+\sum_{k=k_0}^\infty\Big(\tr_B\Big[\rho^B E^{k+1\dagger}_{A,B}(1/2)E^{k+1\dagger}_{AB,C}(1/2)E^{k+1}_{AB,C}(1/2)E^{k+1}_{A,B}(1/2)\Big]\\
                        &\phantom{==+\sum_{k=k_0}^\infty\Big(}-\tr_B\Big[\rho^B E^{k\dagger}_{A,B}(1/2)E^{k\dagger}_{AB,C}(1/2)E^{k}_{AB,C}(1/2)E^{k}_{A,B}(1/2)\Big]\Big)\label{eq:decompLast}
\end{align}
\begin{figure}
        \centering
        \begin{tikzpicture}[
        scale=0.7, transform shape=false,
    spin/.style={circle,draw,fill=black,minimum size=2mm,inner sep=0pt},
    iden/.style={font=\small},
    every node/.style={font=\small}
]

\def\nA{7}
\def\nB{5}
\def\nC{7}
\pgfmathtruncatemacro{\N}{\nA+\nB+\nC}

\newcommand{\drawfullchain}[1]{
    \foreach \i in {1,...,\N}{
        \node[spin] at (\i,#1) {};
    }
}

\newcommand{\drawACline}[3]{

\pgfmathtruncatemacro{\startB}{\nA+1}
\pgfmathtruncatemacro{\endB}{\nA+\nB}
\pgfmathtruncatemacro{\Acut}{\nA-#2}   

\foreach \i in {1,...,\nA}{
    \ifnum\i>\Acut
        \node[spin,opacity=#3] at (\i,#1) {};
    \else
        \node at (\i,#1) {$\mathds{1}$};
    \fi
}


\foreach \i in {1,...,\nC}{
    \pgfmathtruncatemacro{\pos}{\endB+\i}
    \pgfmathtruncatemacro{\nplus}{#2+1}
    \ifnum\i<\nplus
        \node[spin,opacity=#3] at (\pos,#1) {};
    \else
        \node at (\pos,#1) {$\mathds{1}$};
    \fi
}
}


\drawfullchain{0}

\pgfmathtruncatemacro{\midA}{\nA/2+1}
\pgfmathtruncatemacro{\midB}{\nA+\nB/2+1}
\pgfmathtruncatemacro{\midC}{\nA+\nB+\nC/2+1}

\node at (\midA,0.9) {$A$};
\node at (\midB,0.9) {$B$};
\node at (\midC,0.9) {$C$};


\drawACline{-1.8}{3}{1}

        \node[right] at (\N+1,-1.8) {$\Gamma(3)$};

\node at (\midB,-1.8){separable};


\drawACline{-3.4}{4}{1}

\node[right] at (\N+1,-3.4) {$\Delta_{3}$};


\drawACline{-5.0}{5}{0.6}

\node[right] at (\N+1,-5.0) {$\Delta_{4}$};


\drawACline{-6.6}{6}{0.3}

\node[right] at (\N+1,-6.6) {$\Delta_{5}$};

\pgfmathsetmacro{\xAB}{\nA + 0.5}
\pgfmathsetmacro{\xBC}{\nA + \nB + 0.5}

\def\ytop{0.9}
\def\ybottom{-7.2}

\draw[dotted] (\xAB,\ytop) -- (\xAB,\ybottom);
\draw[dotted] (\xBC,\ytop) -- (\xBC,\ybottom);
\draw[dotted] (1,-2.6) -- (\N,-2.6);
\node at (\midB,-4.2) {entangled,};
\node at (\midB,-5.2) {decaying};

\end{tikzpicture}
        \caption{\label{fig:decomp} Illustration of the decomposition used in our proof. In addition to the separable contribution $\Gamma(k_0)$, the superexponentially decaying remainder terms $\Delta_k$ are added to the identity contribution $\gamma(k)\id_{AC}$ (not depicted), leaving it separable.}
\end{figure}
Note that the above summation is in fact finite as the terms are zero whenever $k\ge\max\{|A|,|C|\}$.
Let us first consider the terms in the sum separately.
\begin{equation}\label{eq:defDelta}
\begin{split}
        \Delta_k & :=\tr_B\Big[\rho^B E^{k+1\dagger}_{A,B}(1/2)E^{k+1\dagger}_{AB,C}(1/2)E^{k+1}_{AB,C}(1/2)E^{k+1}_{A,B}(1/2)\Big]\\
&\phantom{=====}-\tr_B\Big[\rho^B E^{k\dagger}_{A,B}(1/2)E^{k\dagger}_{AB,C}(1/2)E^{k}_{AB,C}(1/2)E^{k}_{A,B}(1/2)\Big]
\end{split}
\end{equation}
We note that $\Delta_k$ is supported on $\partial^{k+1} B\cap (A\cup C)$, which is of size $2k+2$.
Further we can bound the operator norm of $\Delta_k$:
\begin{align*}
        \|\Delta_k\|&\le\Big\|E^{k+1\dagger}_{A,B}(1/2)E^{k+1\dagger}_{AB,C}(1/2)E^{k+1}_{AB,C}(1/2)E^{k+1}_{A,B}(1/2)\\
                    &\phantom{==}-E^{k\dagger}_{A,B}(1/2)E^{k\dagger}_{AB,C}(1/2)E^{k}_{AB,C}(1/2)E^{k}_{A,B}(1/2)\Big\|\\
                    &\le \left\|E^{k+1\dagger}_{A,B}(1/2)-E^{k\dagger}_{A,B}(1/2)\right\|\left\|E^{k+1\dagger}_{AB,C}(1/2)E^{k+1}_{AB,C}(1/2)E^{k+1}_{A,B}(1/2)\right\|\\
                    &\phantom{==}+\left\|E^{k\dagger}_{A,B}(1/2)\right\|\left\|E^{k+1\dagger}_{AB,C}(1/2)-E^{k\dagger}_{AB,C}(1/2)\right\|\left\|E^{k+1}_{AB,C}(1/2)E^{k+1}_{A,B}(1/2)\right\|\\
                    &\phantom{==}+\left\|E^{k\dagger}_{A,B}(1/2)E^{k\dagger}_{AB,C}(1/2)\right\|\left\| E^{k+1}_{AB,C}(1/2)-E^{k}_{AB,C}(1/2)\right\|\left\|E^{k+1}_{A,B}(1/2)\right\|\\
                    &\phantom{==}+\left\|E^{k\dagger}_{A,B}(1/2)E^{k\dagger}_{AB,C}(1/2)E^{k}_{AB,C}(1/2)\right\|\left\|E^{k+1}_{A,B}(1/2)-E^{k}_{A,B}(1/2)\right\|\\
                    &\le4\cG^3 \frac{\cG^k}{(\lfloor k/r\rfloor+1)!}\,,
\end{align*}
where the first inequality follows from Lemma~\ref{lem:pTrContract}, the second from the triangle inequality and submultiplicativity of the operator norm, and the last line is an application of Lemma~\ref{lem:arakiExp}.

This however does not quite suffice to apply Lemma~\ref{lem:sepPerturb} yet, as $\Delta_k$ is a difference of operators and may not be bounded away from zero.

Let us therefore, consider the first expression in Line~\eqref{eq:k0region}, which is equal to
\begin{equation}\label{eq:k0SeparDecomp}
        \frac{Z_{k_0}}{Z_B}\exp(H^{k_0}_{AC}/2)\rho_{AC}^{k_0}\exp(H^{k_0}_{AC}/2)=\frac{Z_{k_0}}{Z_B}\Gamma(k_0)+\frac{Z_{k_0}}{Z_B}\gamma(k_0)\id_{AC}
\end{equation}
by Proposition~\ref{prop:constACsep} as long as $|B|\ge \ell_1(k_0)$. While both terms are already separable, we will view the $\Delta_k$ as a perturbation to the second term that remains separable.

Noting again that by Lemma~\ref{lem:partFuncRatio}, we have
\begin{equation}\label{eq:Zratk0}
        \frac{Z_{k_0}}{Z_B}\ge\exp(-\left\|H^{k_0}_{ABC}-H_B\right\|+2k_0\log(d))\ge \exp(-2k_0J+2k_0\log(d))
\end{equation}
Hence, we can write 
\begin{equation}\label{eq:idLow}
        \frac{Z_{k_0}}{Z_B}\gamma(k_0)\ge C \exp(-\alpha k_0)
\end{equation}
for some constants $C,\alpha>0$ that combine the ones from Equation~\eqref{eq:Zratk0} with the ones lower bounding $\gamma(k)$ in Proposition~\ref{prop:constACsep}.

We now decompose
\begin{align*}
        C\exp(-\alpha k_0)\id_{AC}+\sum_{k=k_0}^\infty\Delta_k=\sum_{k=k_0}^\infty C e^{-\alpha k_0-\log(2)(k-k_0+1)}\id_{AC}+\Delta_k.
\end{align*}
For this to be separable by Lemma~\ref{lem:sepPerturb}, we need to ensure that
\[
        \|\Delta_k\|\le d^{-k-1} \frac1Ce^{-\alpha k_0-\log(2)(k-k_0+1)}\,,
\]
i.e.,
\[
        \cG^3\frac{\cG^k}{(\lfloor k/r\rfloor+1)!}\le d^{-k-1} \frac1Ce^{-\alpha k_0-\log(2)(k-k_0+1)}\,,
\]
for all $k\ge k_0$.
This condition can be summarized as
\[
        C' \exp(\alpha' k+\alpha_0 k_0)\le (\lfloor k/r\rfloor+1)!
\]
for some new constants $C'\ge1$, $\alpha'>0$, $\alpha_0$ and is required $\forall k\ge k_0$.
By Robbin-Stirling's formula $n!\ge(n/e)^n$, this is satisfied if we enforce the (rather loose) bound\footnote{Choosing $k_0\ge r\exp(r\alpha'+1)$, the right-hand side of $C'\exp(\alpha'k+\alpha_0k_0)\le\exp(k(\log(k/r)-1)/r)$ grows faster than the left-hand side in $k$ so it suffices to enforce it for $k=k_0$. Replacing $C'$ by $C'^{k_0}$ strengthens the inequality and makes it solvable.}
\[
        k_0:=re\exp(r(\log(C')+\alpha_0+\alpha'))
\]
and so we set our entanglement lengthscale $\ell(J,d,r)=\ell_1(k_0)$, where the dependency on the parameters enters through the choices of constants and the implicit dependence of the function $\ell_1$ from Proposition~\ref{prop:constACsep}.

Let us now summarize the decomposition to show that this closes the proof:
We combine \eqref{eq:k0SeparDecomp} and \eqref{eq:defDelta} into \eqref{eq:sandwMarg}-\eqref{eq:decompLast}
\begin{align*}
        \frac{Z_{ABC}}{Z_B}e^{H_{AC}/2}\rho_{AC}e^{H_{AC}/2}&=\frac{Z_{k_0}}{Z_B}\left(\Gamma(k_0)+\gamma(k_0)\id_{AC}\right)+\sum_{k=k_0}\Delta_k\\
                                                      &=\frac{Z_{k_0}}{Z_B}\Gamma(k_0)+\left(\frac{Z_{k_0}}{Z_B}\gamma(k_0)-C e^{-\alpha k_0}\right)\id_{AC}\\
                                                      &\phantom{=}+\sum_{k=k_0}^\infty C e^{-\alpha k_0-\log(2)(k-k_0+1)}\id_{AC}+\Delta_k\,,
\end{align*}
where the first term was already chosen separable by Proposition~\ref{prop:constACsep}, the second term is positive (and thereby separable) by Equation~\eqref{eq:idLow}, and separability of the last term was just shown for the given entanglement lengthscale.
\end{proof}

\section{The thermodynamic limit}\label{sec:KMS}
In this section, we discuss the implications of our result for the thermodynamic limit by formulating it in terms of states on the quasilocal algebra of the infinite systems.
This is a straightforward corollary as our statement already holds uniformly in system size, but requires some additional setup and definitions.
We recall the quasilocal algebra $\cA=\overline{\cup_n\cA_{[-n,n]}}$
and in addition define the left and right algebras $\cA_L=\overline{\cup_n\cA_{[-n,0]}}$, $\cA_R=\overline{\cup_n\cA_{[1,n]}}$, where all closures are with respect to the operator norm.
Note that the quasilocal algebra is \emph{not} the algebraic tensor product of the left and right algebras.
However, evidently we can still embed the left and right algebra in the quasilocal algebra $\cA_L,\cA_R\subset\cA$ and $\cA_L\cap\cA_R=\CC\id$.

The following definition is taken from \cite[Definition 3.2]{keyl2006}.
\begin{definition}
        A state $\omega$ on $\cA$ is called a product state between $\cA_L$ and $\cA_R$ if for all $A\in\cA_L$, $B\in\cA_R$ it satisfies $\omega(AB)=\omega(A)\omega(B)$. A state is called separable if it lies in the weak*-closure of the convex hull of product states.
\end{definition}
Further, recall that our model of bounded finite-range interaction defines a time evolution on $\cA$ as the automorphism group $\alpha_t=\lim_{n\to\infty}\exp(it[H_{[-n,n]},\cdot])$ and that a state $\omega$ is a KMS state $\omega$ (at $\beta=1$) if
\[
        \omega(A\alpha_i(B))=\omega(BA)
\]
for all $A$, $B$ in a norm-dense, $\alpha_t$-invariant *-subalgebra of $\cA$ \cite[Definition 5.3.1]{bratteli1987a}, but we will not work directly with the definition.

We still denote by $\tr_{[-n,0]}:\cA_L\to\cA_L$ the partial trace extended by continuity to the closure of the algebra. Note that according to the usual embedding we identify $\tr_{[-n,0]}[A]=\tr_{[-n,0]}[A]\otimes \id_{[-n,0]}$.

\begin{corollary}\label{cor:KMS}
        Let $\Phi$ be an interaction of range $r$ and interaction strength $J$ and let $\omega$ be the KMS state.
        There exists an $\ell\in\NN$ only dependent on $r$, $J$, and $d$, such that
        \[
                \omega\circ\tr_{[-\ell,0]}
        \]
        is separable.
\end{corollary}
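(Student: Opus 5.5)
The plan is to obtain $\omega\circ\tr_{[-\ell,0]}$ as a weak*-limit of finite-volume states that are separable by Theorem~\ref{thm:mainTec}, and then to use that the separable states on $\cA$ form a weak*-closed set by definition.

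First I would fix the unique KMS state $\omega$ for the finite-range interaction (Araki). I would then use the standard fact that it is the weak*-limit of the finite-volume Gibbs states $\omega_n(X)=\Tr[\rho^{[-n,n]}X]$ for $X\in\cA_0$, extended by continuity. The shifted intervals $[-n-\ell,n]$ would serve equally well. Uniqueness makes any such limit point coincide with $\omega$, and weak*-compactness of the state space supplies the limit points.

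Next I would choose $\ell$ to be the length from Theorem~\ref{thm:mainTec}, up to an index shift of one, and set $B=[-\ell,0]$, $A=[-n,-\ell-1]$ and $C=[1,n]$. By that theorem the marginal $\rho_{AC}$ of $\rho^{[-n,n]}$ is separable between $A$ and $C$, with $\ell$ independent of $n$.

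I would then identify the composed functional. For local $X\in\cA_{[-m,m]}$ with $m<n$, the quantity $\omega_n(\tr_{[-\ell,0]}[X])$, with $\tr$ normalized so that $\tr_B[\id]$ acts as the state-preserving conditional expectation, equals $\Tr[(\rho_{AC}\otimes \id_B/d^{|B|})X]$. This state is separable between $\cA_L$ and $\cA_R$. Indeed, write $\rho_{AC}=\sum_i p_i\sigma^A_i\otimes\sigma^C_i$ and extend each $\sigma_i^A\otimes \id_B/d^{|B|}$ and each $\sigma_i^C$ to states on $\cA_L$ and $\cA_R$, for instance by tensoring with tracial states outside $[-n,n]$. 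Each term then yields a product state on $\cA$ in the sense of the definition, because the tensor product of states on disjoint local algebras extends to the quasilocal algebra. The finite convex combination is therefore in the convex hull of product states.

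Finally, $\omega_n\circ\tr_{[-\ell,0]}\to\omega\circ\tr_{[-\ell,0]}$ weak*. This follows because $\tr_{[-\ell,0]}$ maps $\cA_0$ to $\cA_0$ and is norm-contractive (Lemma~\ref{lem:pTrContract} type bound), so convergence on the dense local algebra together with uniform boundedness gives convergence on all of $\cA$. The limit is a weak*-limit of separable states, hence separable by the weak*-closure in the definition.

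The main obstacle is bookkeeping rather than analysis. One point is the normalization convention of $\tr_{[-\ell,0]}$: the paper's unnormalized partial trace requires a factor $d^{-|B|}$ for the composition to be a state. The other is justifying convergence of finite-volume Gibbs states to the KMS state, including boundary conditions. Both rest on standard results (Araki, Bratteli--Robinson) rather than new estimates.
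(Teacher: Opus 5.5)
Your proposal is correct and follows essentially the same route as the paper. The shared steps are:
\begin{enumerate}
\item Extend the finite-volume Gibbs states $\rho^{[-m,m]}$ to $\cA$ by the tracial state outside $[-m,m]$.
\item Use Bratteli--Robinson together with Araki's uniqueness of the KMS state to obtain weak$^*$ convergence to $\omega$.
\item Apply Theorem~\ref{thm:mainTec} to write each $\omega^m\circ\tr_{[-\ell,0]}$ as a finite convex combination of product states on $\cA$.
\item Conclude by weak$^*$-closedness of the separable set.
\end{enumerate}
Your remark that $\tr_{[-\ell,0]}$ must be normalized by $d^{-|B|}$ for the composition to be a state is a fair bookkeeping point that the paper leaves implicit. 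It does not affect the conclusion, since separability is a cone property.
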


\begin{proof}
        Consider the density matrices $\rho^{[-m,m]}$, which define states on $\cA_{[-m,m]}=\cA_{[-m,0]}\otimes\cA_{[1,m]}$.
        The typical construction of thermal states extends these states using the Hahn-Banach theorem to the quasilocal algebra and then proceeds to prove that the weak-* limit in $\Lambda$ is a KMS state.
        However, since the first part of this construction, the extension by Hahn-Banach, may introduce additional correlations/entanglement between the left and right subsystems, we choose a more explicit construction as follows:
        Define the states $\omega^m$ on $\cA_\Lambda$ for any $\Lambda\supset[-m,m]$ by
        \[
                \omega^m(A)=\Tr_\Lambda\left[\left(\rho^{[-m,m]}\otimes\frac{\id_{\Lambda\setminus[-m,m]}}{d^{2m+1}}\right)A\right]
        \]
        Since these are consistent with the embedding into larger algebras this defines a state on $\cA_0$, which we also denote by $\omega^m$.
        For $A\in\cA$, we choose $A_n\in\cA_0$ with $A_n\to A$ and define $\omega^m(A)=\lim_{n\to\infty}\omega^m(A_n)$, which is unique since for $\cA_0\ni B_n\to A$
        \[
                \lim_{n\to\infty}\left|\omega^m(A_n)-\omega^m(B_n)\right|\le\lim_{n\to\infty} \|A_n-B_n\|=0\,.
        \]

        First note that the sequence $\omega^m$ has a weak-* convergent subnet $\omega^\alpha$, and that every limit point is a KMS state \cite[Proposition 6.2.15]{bratteli1987a}, which by \cite{araki1975} is unique in our setup of one-dimensional finite-range interactions, so all limit points coincide.
        We denote the KMS state by $\omega$ and observe that 
        \[
                \omega\circ\tr_{[-\ell,0]}=\left(\lim_\alpha\omega^\alpha\right)\circ\tr_{[-\ell,0]}=\lim_\alpha\left(\omega^\alpha\circ\tr_{[-\ell,0]}\right)
        \]
        as the limit is taken in the weak-* sense and $\tr_{[-\ell,0]}(\cA)\subset\cA$.

        Due to the closedness of the set of separable operators in the weak-* topology we are left to show that each $\omega^\alpha\circ\tr_{[-\ell,0]}$ (or each $\omega^m\circ\tr_{[-\ell,0]}$) is separable.
        By Theorem~\ref{thm:mainTec}, choosing a sufficiently large $\ell$, we can decompose
        \begin{equation}\label{eq:KMSdecompA0}
                \omega^m\circ\tr_{[-\ell,0]}=\sum_{k} p_k \omega^{m,k}_L\otimes\omega^{m,k}_R:=\sum_k p_k \omega^{m,k}
                \end{equation}
        on $\cA_0$, where the sum is finite for each $m$ by Carath\'eodory's theorem.
        Again we define $\omega^{m,k}$ on $\cA$ using limits of approximating sequences, which are unique due to boundedness as before.
        Since Equation~\eqref{eq:KMSdecompA0} holds on a dense subset of $\cA$, it remains true for their extensions to $\cA$.
        Finally, for each $\omega^{m,k}$ and for $A\in\cA_L$, $B\in\cA_R$, we can find an approximating sequence $A_n\to A$, $B_n\to B$ (recall the limit is independent of the choice), where $A_n\in\cA_0\cap\cA_L$, $B_n\in\cA_0\cap\cA_R$ and
        \begin{align*}
                \omega^{m,k}(AB)& =\lim_{n\to\infty}\omega^{m,k}(A_nB_n)\\
                                &= \lim_{n\to\infty}\omega^{m,k}(A_n)\omega^{m,k}(B_n)\\
                                &=\lim_{n\to\infty}\omega^{k,m}(A_n)\omega^{m,k}(B_n)\\
                                &=\omega^{m,k}(A)\omega^{m,k}(B)\,.
\end{align*}
The first equality  uses continuity of $\tr_{[-\ell,0]}$, the second equality the fact that $\omega^{m,k}$ is product in $\cA_0$, and the rest is definition.
So $\omega^{m,k}$ is product on $\cA$, $\omega^m\circ\tr_{[-\ell,0]}$ is separable, and the proof follows.

\end{proof}

\section*{Acknowledgements}
I would like to thank Isaac Kim, Bruno Nachtergaele, Andreas Bluhm, and Ainesh Bakshi for helpful discussions.
This material is based upon work supported by the U.S. Department of Energy, Office of Science, National Quantum Information Science Research Centers, Quantum Science Center.
\clearpage
\sloppy
\printbibliography
\end{document}